\newcommand{\eps}{\varepsilon}
\newcommand{\Ebb}{\mathbb{E}}
\newcommand{\Pbb}{\mathbb{P}}
\newcommand{\Rbb}{\mathbb{R}}
\newcommand{\Acal}{\mathcal{A}}
\newcommand{\Ccal}{\mathcal{C}}
\newcommand{\Ecal}{\mathcal{E}}
\newcommand{\Lcal}{\mathcal{L}}
\newcommand{\Mcal}{\mathcal{M}}
\newcommand{\Scal}{\mathcal{S}}
\newcommand{\Tcal}{\mathcal{T}}
\newcommand{\Xcal}{\mathcal{X}}
\newcommand{\Ycal}{\mathcal{Y}}
\newcommand{\Zcal}{\mathcal{Z}}
\newcommand{\Xsf}{\mathsf{X}}
\newcommand{\xsf}{\mathsf{x}}
\newcommand{\ysf}{\mathsf{y}}
\newcommand{\zsf}{\mathsf{z}}
\DeclareMathOperator*{\argmin}{arg\,min}
\DeclareMathOperator*{\argmax}{arg\,max}
\newtheorem{thm}{Theorem}
\newtheorem{cor}{Corollary}
\newtheorem{lemma}{Lemma}
\newtheorem{defn}{Definition}
\newtheorem{prop}{Proposition}
\newtheorem{property}{Property}
\tikzstyle{arw}=[->,>=latex]
\tikzstyle{node}=[draw,rectangle,rounded corners, minimum width=1cm,minimum height =.75 cm]
\title{The Henchman Problem: Measuring Secrecy by the Minimum Distortion in a List}
\date{October, 2014}
\author{
\IEEEauthorblockN{Curt Schieler and Paul Cuff \thanks{C. Schieler (schieler@princeton.edu) and P. Cuff (cuff@princeton.edu) are with the Department of Electrical Engineering, Princeton University, Princeton, NJ, 08544. \newline This work was supported in part by the National Science Foundation under Grants CCF-1116013 and by the Air Force Office of Scientific Research under Grant FA9550-12-1-0196. Part of this work was presented in \cite{Schieler2014henchman}. }}
%\IEEEauthorblockA{Dept. of Electrical Engineering,\\
%Princeton University,
%Princeton, NJ 08544.\\
%E-mail: \{schieler, cuff\}@princeton.edu }
}
\begin{document}
\maketitle
\begin{abstract}
We introduce a new measure of information-theoretic secrecy based on rate-distortion theory and study it in the context of the Shannon cipher system. Whereas rate-distortion theory is traditionally concerned with a single reconstruction sequence, in this work we suppose that an eavesdropper produces a list of $2^{nR_{\sf L}}$ reconstruction sequences and measure secrecy by the minimum distortion over the entire list. We show that this setting is equivalent to one in which an eavesdropper must reconstruct a single sequence, but also receives side information about the source sequence and public message from a rate-limited henchman (a helper for an adversary). We characterize the optimal tradeoff of secret key rate, list rate, and eavesdropper distortion. The solution hinges on a problem of independent interest: lossy compression of a codeword drawn uniformly from a random codebook. We also characterize the solution to the lossy communication version of the problem in which distortion is allowed at the legitimate receiver. The analysis in both settings is greatly aided by a recent technique for proving source coding results with the use of a likelihood encoder.
\end{abstract}
\section{Introduction}
%\blfootnote{This research was supported in part by the National Science Foundation under Grant CCF-1116013, and also by the Air Force Office of Scientific Research under Grant FA9550-12-1-0196.}
A ubiquitous model in information-theoretic secrecy is the Shannon cipher system \cite{Shannon1949} in which two nodes who share secret key want to communicate losslessly in the presence of an eavesdropper. As depicted in Figure~\ref{fig:scs}, Node A views an i.i.d. source sequence $X^n$ and uses the shared secret key $K$ that is independent of the source to produce an encrypted message $M$. Node B uses the message and the key to produce $\hat{X}^n$. An eavesdropper views the message and knows the scheme that Nodes A and B employ.

Also ubiquitous is the investigation of how to measure secrecy when there is not enough key to ensure perfect secrecy, i.e. when the key rate is less than the entropy of the information source. One potential solution, proposed by Yamamoto in \cite{Yamamoto1997}, is to measure secrecy by the distortion that an eavesdropper incurs in attempting to reconstruct the source sequence. In accordance with the usual constructs in rate-distortion theory, this means that Nodes A and B want to maximize the following expression over all possible codes:
\begin{equation}
\min_{z^n(m)} \Pbb[d(X^n,z^n(M)) \geq D].
\end{equation}
Although this seems like a reasonable objective at first glance, it was shown in \cite{Schieler2013} that simple codes employing negligible rates of secret key can force this probability to one, regardless of the distortion level $D$. The reason for this disconcerting result is that the accompanying secrecy guarantees can be fragile, as the following example elucidates. Let $X^n$ be i.i.d. $\text{Bern}(1/2)$ and suppose that there is just one bit of secret key, i.e. $K\in\{0,1\}$. Encrypt by transmitting $X^n$ itself if $K=0$ and $X^n$ with all its bits flipped if $K=1$. In this scenario, any optimal reconstruction $Z^n$ that the eavesdropper produces has expected hamming distortion equal to $1/2$, the highest expected distortion that the eavesdropper could possibly incur. Despite this, the eavesdropper actually knows quite a bit about $X^n$, namely that it is one of two sequences. Indeed, the guarantee of secrecy is rather fragile because if the eavesdropper learns just one bit of the source sequence, then the entire sequence is compromised. 

In view of the previous example, one way to strengthen a distortion-based measure of secrecy is to design schemes around the assumption that the eavesdropper has access to some side information. In \cite{Schieler2013}, this is accomplished by supposing that eavesdropper views the causal behavior of the system; in particular, the eavesdropper reconstructs $Z_i$ based on $X^{i-1}$ and the public message $M$.

\begin{figure}
\begin{center}
\begin{tikzpicture}
 [node distance=1cm,minimum width=1cm,minimum height =.75 cm]
  \node[rectangle,minimum width=5mm] (source) {$X^n$};
  \node[node] (alice) [right =7mm of source] {A};
  \node[node] (bob) [right =3cm of alice] {B};
  \node[coordinate] (dummy) at ($(alice.east)!0.5!(bob.west)$) {};
  \node[rectangle,minimum width=5mm] (xhat) [right =7mm of bob] {$\widehat{X}^n$};
  \node[rectangle,minimum width=7mm] (key) [above =7mm of dummy] {$K\in[2^{nR_0}]$};
  \node[node] (eve) [below =3mm of bob] {Eve};
  
  \draw [arw] (source) to (alice);
  \draw [arw] (alice) to node[minimum height=6mm,inner sep=0pt,midway,above]{$M\in[2^{nR}]$} (bob);
  \draw [arw] (bob) to (xhat);
  \draw [arw] (key) to [out=180,in=90] (alice);
  \draw [arw] (key) to [out=0,in=90] (bob);
  \draw [arw,rounded corners] (dummy) |- (eve);
 \end{tikzpicture}
 \caption{\small The Shannon cipher system with secret key rate $R_0$ and communication rate $R$. In this paper, we measure secrecy by the minimum distortion in a list of reconstruction sequences $\{Z^n(1),\ldots,Z^n(2^{nR_{\sf L}})\}$ that the eavesdropper produces.}
 \label{fig:scs}
 \end{center}
 %\vspace{-5mm}
 \end{figure}
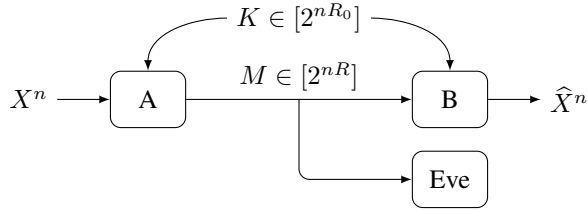

In this paper, we study another distortion-based approach to measuring secrecy in the Shannon cipher system. Instead of requiring a single reconstruction sequence $Z^n$, we suppose that the eavesdropper produces a list of $2^{nR_{\sf L}}$ reconstructions $\{Z^n(1),\ldots,Z^n(2^{nR_{\sf L}})\}$ and consider the minimum distortion over the entire list. This is somewhat reminiscent of equivocation (i.e.,  the conditional entropy $H(X^n|M)$), which also purports to measure the uncertainty of the eavesdropper. However, an important difference in the measure we study is that the structure of the uncertainty is built directly into the definition. The eavesdropper's equivocation merely provides a lower bound on the size of the smallest list that contains the exact source sequence $X^n$. On the other hand, the optimal tradeoff between secret key rate, distortion, and list rate will give us a function $R_{\sf L}(R_0,D)$ that precisely quantifies the size of the smallest list that an eavesdropper is able to produce that reliably contains a sequence of distortion $D$. 

Quantifying secrecy in terms of lists and distortion has been done previously in \cite{Merhav1999} and \cite{Haroutunian2010}, where the eavesdropper is modeled as a ``guessing wiretapper" who produces a sequence of reconstructions. After each estimate, the eavesdropper receives feedback about whether or not the reconstruction was within a certain distortion level.\footnote{In \cite{Merhav1999}, the feedback concerns exact reconstructions, whereas \cite{Haroutunian2010} allows a distortion parameter.} As soon as the distortion level is reached, the eavesdropper stops guessing; the moments of the number of guesses needed indicate the secrecy of the system. Our approach differs from these works in that there is no sequential guessing (no testing mechanism) and the list size is fixed.
\subsection*{Organization}
This paper considers the list-reconstruction measure of secrecy and establishes the information-theoretic characterization of the optimal tradeoffs among the secret key rate, list rate, and distortion at the eavesdropper. We divide the paper into two parts. First, we introduce and solve the problem when lossless communication is required between the legitimate parties (Sections~\ref{sec:setup}--\ref{sec:losslessachievability}). We then introduce the lossy communication setting and solve the corresponding problem (Sections~\ref{sec:mainlossy}--\ref{sec:lossyachievability}), reusing components from the preceding sections where possible. Although the lossy communication setting is a generalization of the lossless setting, there are several complications and subtleties that emerge that warrant the separation. For example, the converse proof is much more involved in the lossy setting.

In Section~\ref{sec:setup}, we formally define the list-based measure of secrecy and the lossless communication setting in which it will be first be analyzed.  We also give an equivalent reformulation of the setting in terms of a malicious helper for the eavesdropper; the resulting ``henchman problem" becomes the default formulation for the remainder of the paper. Section~\ref{sec:mainlossless} contains Theorem~\ref{mainresult}, the characterization of the optimal tradeoffs in the lossless communication setting. The proof of Theorem~\ref{mainresult} is presented in Section~\ref{sec:losslessconverse} (converse) and Section~\ref{sec:losslessachievability} (achievability). In Section~\ref{sec:mainlossy}, we introduce the lossy communication version of the problem and characterize the optimal tradeoffs in Theorem~\ref{mainresultlossy}. The converse and achievability proofs of Theorem~\ref{mainresultlossy} are given in Sections~\ref{sec:lossyconverse} and \ref{sec:lossyachievability}, respectively.

In addition to being a treatment of a new measure of secrecy for the Shannon cipher system, this paper is an endorsement of the efficacy of a likelihood encoder for proving source coding results. As detailed in \cite{Song2014}, a likelihood encoder is a particular stochastic encoder which, when combined with a random codebook, manages to avoid many of the tedious and technical components of achievability proofs in lossy compression problems. The primary conduit for the analysis of a likelihood encoder is the ``soft covering lemma", which is expounded upon in \cite{Cuff2013}. In our case, the technique allows us to extract an idealized subproblem from the crucial part of the achievability proof and consider it independently of the original problem. The subproblem concerns the lossy compression of a codeword drawn uniformly from a random codebook.

\section{Preliminaries}
\label{sec:setup}
\subsection{Notation}
All alphabets (e.g., $\Xcal$, $\Ycal$, and $\Zcal$) are finite. The set $\{1,\ldots,m\}$ is sometimes denoted by $[m]$. Given a per-letter distortion measure $d(x,z)$, we abuse notation slightly by defining
\begin{equation}
d(x^n,z^n) \triangleq \frac1n \sum_{i=1}^n d(x_i,z_i).
\end{equation}
We also assume that for every $x\in\Xcal$, there exists $z\in\Zcal$ such that $d(x,z)=0$. 

We denote the empirical distribution (or type) of a sequence $x^n$ by $T_{x^n}$:
\begin{equation}
T_{x^n}(x) = \frac1n \sum_{i=1}^n \mathbf{1}\{x = x_i\}.
\end{equation}

\subsection{Total variation distance}
Throughout the paper, we make frequent use of the total variation distance between two probability measures $P$ and $Q$ with common alphabet, defined by
\begin{equation}
\lVert P - Q \rVert_{\sf TV} \triangleq \sup_{A\in\mathcal{F}} |P(A) - Q(A)|.  
\end{equation}
The following properties of total variation distance are quite useful.
\begin{property}
\label{tvproperties}
Total variation distance satisfies:
\begin{enumerate}[(a)]
\item If the support of $P$ and $Q$ is a countable set $\Xcal$, then
\begin{equation}
\lVert P - Q \rVert_{\sf TV} = \frac12 \sum_{x\in\Xcal} |P(\{x\})-Q(\{x\})|.
\end{equation}
\item Let $\eps>0$ and let $f(x)$ be a function with bounded range of width $b>0$. Then
\begin{equation}
\label{tvcontinuous}
\lVert P-Q \rVert_{\sf TV} < \eps \:\Longrightarrow\: \big| \Ebb_Pf(X) - \Ebb_Qf(X) \big | < \eps b,
\end{equation}
where $\Ebb_{P}$ indicates that the expectation is taken with respect to the distribution $P$.
%\item Total variation satisfies the triangle inequality. For any $R \in \Delta_{\Xcal}$, 
%\begin{equation}
%\lVert P - Q \rVert \leq \lVert P - R \rVert + \lVert R - Q \rVert.
%\end{equation}
\item Let $P_{X}P_{Y|X}$ and $Q_XP_{Y|X}$ be two joint distributions with common channel $P_{Y|X}$. Then
\begin{equation}
\lVert P_XP_{Y|X} - Q_X P_{Y|X} \rVert_{\sf TV} = \lVert P_X - Q_X \rVert_{\sf TV}.
\end{equation}
\item Let $P_X$ and $Q_X$ be marginal distributions of $P_{XY}$ and $Q_{XY}$. Then
\begin{equation}
\lVert P_X - Q_X \rVert_{\sf TV} \leq \lVert P_{XY} - Q_{XY} \rVert_{\sf TV}.
\end{equation}
\end{enumerate}
\end{property}

\subsection{Problem setup}
As shown in Figure~\ref{fig:scs}, Node A observes a source sequence $X^n$ that is i.i.d. according to a distribution $P_X$. Nodes A and B share common randomness $K\in[2^{nR_0}]$ that is uniformly distributed and independent of $X^n$. Node A sends a message $M$ to Node B over a noiseless channel at rate $R$. 
\begin{defn}
An $(n,R,R_0)$ code consists of:
\begin{IEEEeqnarray}{sl}
Encoder: & f:\Xcal^n\times[2^{nR_0}]\rightarrow [2^{nR}]\\
Decoder: & g:[2^{nR}]\times[2^{nR_0}]\rightarrow \Xcal^n
\end{IEEEeqnarray}
The encoder and decoder can be stochastic (in which case they are denoted by $P_{M|X^n,K}$ and $P_{\widehat{X}^n|M,K}$).
\end{defn}

The encrypted communication (the message $M$) is overheard perfectly by an eavesdropper who produces a list $\Lcal(M)\subset \Zcal^n$ and incurs the minimum distortion over the entire list:
\begin{equation}
\min _{z^n\in \Lcal(M)} d(X^n,z^n).
\end{equation}
Using the secret key and the noiseless channel, Nodes A and B want to communicate losslessly while ensuring that the eavesdropper's optimal strategy suffers distortion above a given level with high probability. The generalization to lossy communication begins in Section~\ref{sec:mainlossy}.

\begin{defn}
\label{listdefn}
The tuple $(R,R_0,R_{\sf L},D)$ is achievable if there exists a sequence of $(n,R,R_0)$ codes such that the error probability $\Pbb[X^n \neq \widehat{X}^n]$ vanishes and, $\forall \eps > 0$,
\begin{equation}
\min_{\substack{\Lcal(m):|\Lcal|\leq2^{nR_{\sf L}}}} \Pbb\Big[\min_{z^n \in \Lcal(M)} d(X^n,z^n) \geq D-\eps\Big] \xrightarrow{n\to\infty} 1.
\end{equation}
\end{defn}
 Thus, we allow the eavesdropper to use any list-valued function $\Lcal:\Mcal \rightarrow \{\Zcal^n\}_1^{2^{nR_{\sf L}}}$, provided the cardinality of the range satisfies {$|\Lcal|~\leq~2^{nR_{\sf L}}$}. Furthermore, we assume that the eavesdropper knows the $(n,R,R_0)$ code and the distribution $P_X$.
 
 \subsection{The henchman problem}
 So far, the problem has been formulated in terms of an eavesdropper who produces a list of $2^{nR_{\sf L}}$ reconstructions. It turns out that we can relate this formulation to one in which an eavesdropper reconstructs a single sequence; this is accomplished by supplying the eavesdropper with a rate-limited helper (a henchman). As depicted in Figure~\ref{fig:henchman}, the eavesdropper receives $nR_{\sf L}$ bits of side information from a henchman who has access to the source sequence $X^n$ and the public message $M$. Since the eavesdropper and henchman cooperate, this means that the eavesdropper effectively receives the best possible $nR_{\sf L}$ bits of side information about the pair $(X^n,M)$ to assist in producing a single reconstruction sequence $Z^n$. 

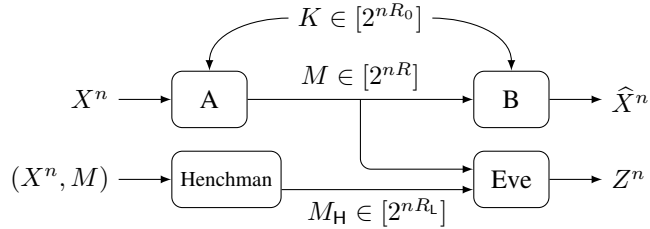
\begin{figure}
\begin{center}
\begin{tikzpicture}
 [node distance=1cm,minimum width=1cm,minimum height =.75 cm]
  \node[rectangle,minimum width=5mm] (source) {$X^n$};
  \node[node] (alice) [right =7mm of source] {A};
  \node[node] (bob) [right =3cm of alice] {B};
  \node[coordinate] (dummy) at ($(alice.east)!0.5!(bob.west)$) {};
  \node[rectangle,minimum width=5mm] (xhat) [right =7mm of bob] {$\widehat{X}^n$};
  \node[rectangle,minimum width=7mm] (key) [above =7mm of dummy] {$K\in[2^{nR_0}]$};
  \node[node] (eve) [below =3mm of bob] {Eve};

  \node[node,anchor=west] (hman) at (eve -| alice.west) {\footnotesize Henchman};
  \node[rectangle,minimum width=5mm] (hsource) [left=7mm of hman] {$(X^n,M)$};
  \node[rectangle,minimum width=5mm] (z) [right=7mm of eve] {$Z^n$};

  \draw [arw] (source) to (alice);
  \draw [arw] (alice) to node[minimum height=6mm,inner sep=0pt,midway,above]{$M\in[2^{nR}]$} (bob);
  \draw [arw] (bob) to (xhat);
  \draw [arw] (key) to [out=180,in=90] (alice);
  \draw [arw] (key) to [out=0,in=90] (bob);
  \draw [arw,rounded corners] (dummy) |- (eve.165);
  
  \draw [arw] (hman.east |- eve.195) to node[minimum height=6mm,inner sep=0pt,midway,below]{$M_{\sf H}\in[2^{nR_{\sf L}}]$} (eve.195);
  \draw [arw] (hsource) to (hman);
  \draw [arw] (eve) to (z);
 \end{tikzpicture}
 \caption{\small The henchman problem. A rate-limited henchman has access to the source sequence and the public message. The eavesdropper produces a single reconstruction sequence $Z^n$ based on the public message and the side information from the henchman.}
 \label{fig:henchman}
 \end{center}
 %\vspace{-5mm}
 \end{figure}

\begin{defn}
\label{henchmandefn}
The tuple $(R,R_0,R_{\sf L},D)$ is achievable in the henchman problem if there exists a sequence of $(n,R,R_0)$ codes such that the error probability $\Pbb[X^n \neq \widehat{X}^n]$ vanishes and, $\forall \eps > 0$,
\begin{equation}
\label{mainobj}
\min_{\substack{m_{\sf H}(x^n,m),z^n(m,m_{\sf H}):\\|\Mcal_{\sf H}|\leq2^{nR_{\sf L}}}} \Pbb\Big[d(X^n,z^n(M,M_{\sf H})) \geq D-\eps\Big] \xrightarrow{n\to\infty} 1.
\end{equation}
\end{defn}
Thus, we allow the eavesdropper and henchman to jointly design a code consisting of an encoder $m_{\sf H}(x^n,m)$ and a decoder $z^n(m,m_{\sf H})$, subject to the constraint $|\Mcal_{\sf H}| \leq 2^{nR_{\sf L}}$. It can be shown that allowing a stochastic encoder or decoder does not decrease the eavesdropper's distortion. As in Definition~\ref{listdefn}, we assume that the adversarial entities are aware of the scheme that Nodes A and B employ, although this is not explicitly indicated in \eqref{mainobj}.

We now demonstrate the equivalence of the list reconstruction problem and the henchman problem.

\begin{prop}
The tuple $(R,R_0,R_{\sf L},D)$ is achievable in the list reconstruction problem if and only if it is achievable in the henchman problem. In other words, Definitions~\ref{listdefn} and~\ref{henchmandefn} are equivalent.
\end{prop}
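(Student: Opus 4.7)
The plan is to show that for any fixed $(n,R,R_0)$ code used by Nodes A and B, the quantity
\begin{equation}
\min_{\mathcal{L}(m):|\mathcal{L}|\leq 2^{nR_{\sf L}}} \min_{z^n\in\mathcal{L}(M)} d(X^n,z^n)
\end{equation}
appearing in Definition~\ref{listdefn} and the quantity
\begin{equation}
\min_{m_{\sf H},z^n:|\Mcal_{\sf H}|\leq 2^{nR_{\sf L}}} d(X^n,z^n(M,m_{\sf H}(X^n,M)))
\end{equation}
appearing in Definition~\ref{henchmandefn} are in fact equal as random variables. Since the achievability requirements in the two definitions involve the same Nodes A, B code and the same error-probability condition, pointwise equality of these two distortions gives the proposition immediately.

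For the direction ``henchman distortion $\leq$ list distortion,'' fix any list function $\mathcal{L}(m)=\{z^n(m,1),\ldots,z^n(m,2^{nR_{\sf L}})\}$ and build a henchman code as follows. The henchman, who sees $(X^n,M)$, computes
\begin{equation}
m_{\sf H}(X^n,M) = \argmin_{j\in[2^{nR_{\sf L}}]} d(X^n,z^n(M,j)),
\end{equation}
and the eavesdropper's decoder outputs $z^n(M,m_{\sf H})$. This is a valid $(m_{\sf H},z^n)$ pair with $|\Mcal_{\sf H}|\leq 2^{nR_{\sf L}}$, and it attains distortion exactly $\min_{z^n\in\mathcal{L}(M)} d(X^n,z^n)$, so the henchman's minimum is no larger than the list's minimum.

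For the reverse direction, fix any henchman code $(m_{\sf H},z^n)$ and define
\begin{equation}
\mathcal{L}(m) \;\triangleq\; \{z^n(m,j):j\in[2^{nR_{\sf L}}]\}\subset\Zcal^n,
\end{equation}
which satisfies $|\mathcal{L}(m)|\leq 2^{nR_{\sf L}}$. Regardless of the realization of $(X^n,M)$, the point $z^n(M,m_{\sf H}(X^n,M))$ is a member of $\mathcal{L}(M)$, so $\min_{z^n\in\mathcal{L}(M)} d(X^n,z^n)\leq d(X^n,z^n(M,m_{\sf H}(X^n,M)))$, showing that the list's minimum is no larger than the henchman's minimum.

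Combining the two directions gives pointwise equality, and hence the tail events in \eqref{mainobj} and in Definition~\ref{listdefn} coincide for every Nodes A, B code. There is no real obstacle here; the proposition is a definitional observation, and the only care required is to note that the henchman's encoder is allowed to depend on $X^n$ (which lets it pick the best element of a list) while the list function is not (which is exactly compensated by the list having access to all $2^{nR_{\sf L}}$ indices at once).
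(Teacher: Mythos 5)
Your two constructions (take the best index in the list for the henchman direction; take the range of the decoder as the list for the other direction) are exactly the ones in the paper and they do the job. However, the logical framing you use to close the argument has a gap. You claim that pointwise equality of the two random variables $\min_{\Lcal}\min_{z^n\in\Lcal(M)}d(X^n,z^n)$ and $\min_{(m_{\sf H},z^n)}d(X^n,z^n(M,m_{\sf H}(X^n,M)))$ ``gives the proposition immediately,'' but those quantities do not appear in Definitions~\ref{listdefn} and~\ref{henchmandefn}: there the outer minimum over strategies is taken \emph{outside} the probability, i.e.\ $\min_{\Lcal}\Pbb[\cdot]$ rather than $\Pbb[\min_{\Lcal}\cdot]$. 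Pointwise equality of the inner minima is not sufficient on its own. For example, if two list strategies each fail on complementary halves of the sample space, the pointwise minimum of their distortions can be $0$ everywhere even though every single list strategy has tail probability $1/2$; a single henchman strategy attaining distortion $0$ a.s.\ would then match the pointwise minimum without matching $\min_{\Lcal}\Pbb[\cdot]$.

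What you actually need, and what your constructions already deliver, is the per-strategy correspondence: for every list function $\Lcal$ there is a henchman code with pointwise distortion $\le$ that of $\Lcal$, and for every henchman code there is a list function with pointwise distortion $\le$ that of the henchman code. From this you conclude, for any fixed $(n,R,R_0)$ code and any $D$,
\begin{equation}
\min_{\Lcal:|\Lcal|\leq 2^{nR_{\sf L}}}\Pbb\Big[\min_{z^n\in\Lcal(M)}d(X^n,z^n)\geq D\Big]
= \min_{\substack{m_{\sf H},z^n:\\|\Mcal_{\sf H}|\leq 2^{nR_{\sf L}}}}\Pbb\Big[d(X^n,z^n(M,M_{\sf H}))\geq D\Big],
\end{equation}
by applying each construction inside the probability and then taking the outer minimum. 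This is the statement the paper proves, and once it is stated this way the proposition follows because both definitions use the same $(n,R,R_0)$ code sequence and the same lossless-communication requirement. So the proof idea is right and coincides with the paper's; you should just rephrase the closing step to compare the minimized probabilities rather than asserting pointwise equality of minimized distortions.
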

\begin{proof}
It is enough to show that the eavesdropper's scheme in the list reconstruction problem can be transformed to a scheme in the henchman problem that achieves the same (or less) distortion, and vice versa. 

Let $\Lcal(m)$ be the function that the eavesdropper uses to produce a list of reconstruction sequences. If the public message is $M$, the list $\Lcal(M)$ can act as a codebook in the henchman problem. Knowing $(X^n,M)$, the henchman can transmit the index of the sequence in $\Lcal(M)$ with the lowest distortion. Upon receiving the index and $M$, the eavesdropper reconstructs the corresponding sequence.

Conversely, suppose that the henchman and eavesdropper have devised an encoder $m_{\sf H}(x^n,m)$ and a decoder $z^n(m,m_{\sf H})$. Upon observing the public message, the eavesdropper has a list of codewords (one for each $m_{\sf H}$) that can be used for the list reconstruction problem. More precisely, the eavesdropper forms the list
\begin{equation}
\Lcal(M) = \{z^n(M,m_{\sf H})\}_{m_{\sf H} \in [2^{nR_{\sf L}}]}.
\end{equation}
In both cases, it is straightforward to verify that the transformation maintains (or decreases) the distortion. To carry out the verification formally, it is enough to show that for any $(n,R,R_0)$ code, 
\begin{equation}
\min_{\substack{\Lcal(m):|\Lcal|\leq2^{nR_{\sf L}}}} \Pbb\Big[\min_{z^n \in \Lcal(M)} d(X^n,z^n) \geq D\Big]  =  \min_{\substack{m_{\sf H}(x^n,m),z^n(m,m_{\sf H}):\\|\Mcal_{\sf H}|\leq2^{nR_{\sf L}}}} \Pbb\Big[d(X^n,z^n(M,M_{\sf H})) \geq D\Big].
\end{equation}
To show $(\geq)$, fix a list reconstruction function $\Lcal(m)$ and define a henchman encoder and eavesdropper decoder by
\begin{IEEEeqnarray}{rCl}
m_{\sf H}(x^n,m) &=& \argmin_{j\in[2^{nR_{\sf L}}]} d(x^n,\Lcal(m,j))\\
z^n(m,m_{\sf H}) &=& \Lcal(m,m_{\sf H}),
\end{IEEEeqnarray}
where $\Lcal(m,j)$ denotes the $j$th element of the list $\Lcal(m)$. Then we have
\begin{IEEEeqnarray}{rCl}
 \Pbb\Big[\min_{z^n \in \Lcal(M)} d(X^n,z^n) \geq D\Big] &=&  \Pbb\Big[d(X^n,z^n(M,M_{\sf H})) \geq D\Big]\\
 &\geq& \min_{\substack{m_{\sf H}(x^n,m),z^n(m,m_{\sf H}):\\|\Mcal_{\sf H}|\leq2^{nR_{\sf L}}}} \Pbb\Big[d(X^n,z^n(M,M_{\sf H})) \geq D\Big].
\end{IEEEeqnarray}
To show $(\leq)$, fix a henchman encoder $m_{\sf H}(x^n,m)$ and eavesdropper decoder $z^n(m,m_{\sf H})$ and define a list reconstruction function by
\begin{equation}
\Lcal(m) = \{z^n(m,m_{\sf H})\}_{m_{\sf H}\in[2^{nR_{\sf L}}]}.
\end{equation}
%Notice that for any pair $(x^n,m)$, this definition of $\Lcal(m)$ satisfies
%\begin{equation}
%\min_{z^n\in \Lcal(m)} d(x^n,z^n) \leq d(x^n,z^n(m,m_{\sf H})).
%\end{equation}
Then we have
\begin{IEEEeqnarray}{rCl}
\Pbb\Big[d(X^n,z^n(M,M_{\sf H})) \geq D\Big] &\geq& \Pbb\Big[\min_{z^n \in \Lcal(M)} d(X^n,z^n) \geq D\Big]\\
&\geq& \min_{\substack{\Lcal(m):|\Lcal|\leq2^{nR_{\sf L}}}} \Pbb\Big[\min_{z^n \in \Lcal(M)} d(X^n,z^n) \geq D\Big].
\end{IEEEeqnarray}
\end{proof}
 %We will deal with the henchman formulation because it more closely resembles a standard information-theoretic setting. 

\section{Main Result (lossless communication)}
\label{sec:mainlossless}
When lossless communication is required between the legitimate parties, we have the following characterization of the tradeoff among the communication rate, secret key rate, list rate (or henchman rate), and eavesdropper's distortion.
\begin{thm}
\label{mainresult}
Given a source distribution $P_X$ and a distortion function $d(x,z)$, the closure of achievable tuples $(R,R_0,R_{\sf L},D)$ is the set of tuples satisfying
 \begin{equation}
 %\left\{
 \begin{IEEEeqnarraybox}[][c]{rCl}
 R &\geq& H(X)\\
 D &\leq& D(R_{\sf L})\cdot \mathbf{1}\{R_0>R_{\sf L}\},
 \end{IEEEeqnarraybox}
%\right\}.
\end{equation}
where $D(\cdot)$ is the point-to-point distortion-rate function:
\begin{equation}
D(R) \triangleq \min_{P_{Z|X}:R\geq I(X;Z)} \Ebb[d(X,Z)].
\end{equation}
\end{thm}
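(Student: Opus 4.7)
My plan is to prove the theorem in two halves. For the converse, three facts together give the claim. The rate bound $R \ge H(X)$ is the standard lossless source coding converse applied through the key: since $K \perp X^n$, Fano gives $nR \ge H(M|K) \ge H(X^n|K) - n\eps_n = nH(X) - n\eps_n$. The $\{0, D(R_{\sf L})\}$ dichotomy on $D$ comes from two universal henchman strategies. First, if $R_0 \le R_{\sf L}$, the henchman simply forwards the key in $nR_0 \le nR_{\sf L}$ bits; the eavesdropper, now equipped with $(M,K)$, runs Bob's decoder and recovers $X^n$ exactly, so any achievable $D$ must be $0$. Second, regardless of $R_0$, the henchman can apply a standard rate-$R_{\sf L}$ lossy code to $X^n$ (ignoring $M$) and hand the resulting index directly to the eavesdropper; by classical rate-distortion achievability the resulting expected distortion is $D(R_{\sf L}) + o(1)$, and with concentration this violates the high-probability condition of Definition~\ref{henchmandefn} whenever $D > D(R_{\sf L})$.

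For the achievability in the nontrivial regime $R_0 > R_{\sf L}$, I would build a scheme around a random codebook and a likelihood encoder. Let $P_{Z|X}$ be a test channel achieving the boundary point $(I(X;Z), \Ebb[d(X,Z)]) = (R_{\sf L}, D(R_{\sf L}))$. Draw a codebook $\Ccal = \{Z^n(i)\}_{i=1}^{2^{nR_{\sf L}}}$ i.i.d.\ from $P_Z$ revealed to all parties, and let Alice select $I$ with probability proportional to $P_{X|Z}^n(X^n \mid Z^n(i))$. The soft covering lemma then ensures that the induced joint law of $(X^n, Z^n(I))$, averaged over codebooks, is close in total variation to the memoryless product $P_{XZ}^n$. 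Alice encrypts $I$ with $nR_{\sf L}$ of the $nR_0$ key bits via a one-time pad and pads the transmission with enough additional (key-protected) information so that Bob, with full access to the key, reconstructs $X^n$ losslessly at rate $R = H(X) + \eps$.

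The heart of the achievability, and in my view its main obstacle, is verifying that no $(m_{\sf H}, z^n)$ pair with $|\Mcal_{\sf H}| \le 2^{nR_{\sf L}}$ can drive the eavesdropper's distortion below $D(R_{\sf L})$ against this construction. The idea is to leverage the one-time pad: in the eavesdropper's marginal view, the encrypted components of $M$ look like fair coins, so the posterior over the codeword index $I$ remains approximately uniform on $[2^{nR_{\sf L}}]$. Any useful $nR_{\sf L}$-bit message from the henchman must therefore encode a lossy description of $Z^n(I)$, which reduces the analysis to the standalone subproblem advertised in the introduction: the optimal expected distortion when compressing a codeword drawn uniformly from a rate-$R_{\sf L}$ random codebook of i.i.d.\ $P_Z$ entries. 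Once that subproblem is solved (its answer is exactly $D(R_{\sf L})$), Property~\ref{tvproperties}(b) transports the expected-distortion bound from the likelihood-encoder distribution to the true $P_{XZ}^n$, and a standard concentration step then promotes the expected-distortion bound to the high-probability statement in Definition~\ref{henchmandefn}. The delicate part is justifying the reduction to the subproblem: one must show that the eavesdropper's access to $M$ and to the codebook $\Ccal$ does not leak usable information about $X^n$ beyond what the henchman's description of $Z^n(I)$ already supplies.
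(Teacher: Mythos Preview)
Your converse is fine and matches the paper's argument exactly.

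The achievability has a genuine gap. You build a $Z^n$-codebook of size $2^{nR_{\sf L}}$, one-time-pad the index $I$ with $nR_{\sf L}$ key bits, and then append ``additional (key-protected) information'' so that Bob can recover $X^n$ losslessly at total rate $H(X)+\eps$. But after padding $I$ you have only $n(R_0 - R_{\sf L})$ key bits left, whereas the additional information needed for losslessness has rate roughly $H(X) - R_{\sf L}$. In the interesting regime $R_{\sf L} < R_0 < H(X)$, this additional component cannot be fully encrypted, so a portion of $M$ at rate $H(X)-R_0$ is visible to the eavesdropper in the clear. Since that portion is a function of $X^n$, and $I$ is (stochastically) a function of $X^n$, the posterior of $I$ given $M$ is \emph{not} approximately uniform on $[2^{nR_{\sf L}}]$. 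Your reduction to ``compressing a codeword drawn uniformly from a rate-$R_{\sf L}$ random codebook of i.i.d.\ $P_Z$ entries'' therefore does not go through. (Separately, the stated subproblem has codebook rate equal to the henchman's rate $R_{\sf L}$, so the index can be sent exactly; the claimed answer ``$D(R_{\sf L})$'' is not the output of that subproblem but rather the residual distortion $d(X^n,Z^n(I))$ after the channel, and you would still need to argue that the henchman cannot use his $nR_{\sf L}$ bits to beat that residual.)

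The paper's achievability is structured differently in a way that sidesteps this leakage issue. The codebook consists of $2^{n(R+R_0)}$ i.i.d.\ $X^n$-sequences indexed by pairs $(m,k)$, and Node~A uses a likelihood encoder to select $m$ given $(x^n,k)$. The soft covering lemma shows that under the idealized joint law $Q$, the conditional $Q_{X^n|M=m}$ is exactly uniform over the size-$2^{nR_0}$ sub-codebook $\{X^n(m,1),\ldots,X^n(m,2^{nR_0})\}$. The relevant subproblem is thus: compress a codeword drawn uniformly from a random $P_X$-codebook of size $2^{nR_0}$ using rate $R_{\sf L}$. The paper proves (Theorem~\ref{cbthm}) that when $R_{\sf L} < \min\{R_0, R(D)\}$, distortion $D$ is unachievable with high probability over the random codebook; the argument is a Chernoff bound showing doubly-exponential decay, which survives a union bound over all $z^n\in\Zcal^n$. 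The key quantitative difference from your setup is that the codebook has size $2^{nR_0}$, strictly larger than the henchman's budget $2^{nR_{\sf L}}$, so the henchman cannot simply transmit an index.
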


%In the context of the list formulation of the problem, this means that the optimal strategy for the eavesdropper is to list the codewords from a good point-to-point rate-distortion codebook

Perhaps the most striking part of Theorem~\ref{mainresult} is that the region is discontinuous. Fixing a rate of secret key $R_0$, observe that when the list rate $R_{\sf L}$ is strictly less than $R_0$, the $(R_{\sf L},D)$ tradeoff follows the point-to-point rate-distortion function. However, as soon as $R_{\sf L}$ equals or exceeds the secret key rate, the eavesdropper's distortion drops to zero (the minimum distortion possible) because all possible decryptions can be enumerated in a list of size $2^{nR_0}$. Figure~\ref{fig:region} illustrates Theorem~\ref{mainresult} for a $\text{Bern}(1/2)$ source and hamming distortion; the communication rate is assumed to satisfy $R\geq H(X)$ and has no effect on the $(R_0,R_{\sf L},D)$ tradeoff.

Note that setting $R_{\sf L}=0$ in the region of Theorem~\ref{mainresult} corresponds to requiring a single reconstruction (without a henchman), which was Yamamoto's original formulation of the problem in \cite{Yamamoto1997}. In this case, we see that any positive rate of secret key results in distortion $D(0)$, the maximum expected distortion that can occur.

In the context of the list reconstruction formulation, Theorem~\ref{mainresult} implies that when Nodes A and B act optimally and $R_{\sf L} < R_0$, the eavesdropper's best strategy is to simply ignore the public message and list the codewords from a good point-to-point rate-distortion codebook. In particular, the public message is useless to the eavesdropper in this regime. However, when $R_{\sf L} \geq R_0$, the eavesdropper uses a different strategy and produces all possible decryptions of the public message. When we consider the lossy communication setting, we will see that a similar strategy switch occurs.

We now prove the achievability and converse portions of Theorem~\ref{mainresult}. For the entirety of the proof, we use the henchman formulation instead of the list reconstruction one. The main idea in the proof of achievability concerns the problem of compressing codewords from a random codebook beyond the rate-distortion limit; the proof also relies on a likelihood encoder \cite{Song2014} and the soft covering lemma {\cite[Lemma IV.1]{Cuff2013}}. The converse is straightforward, as we now show.

\begin{figure}
\begin{center}
\includegraphics{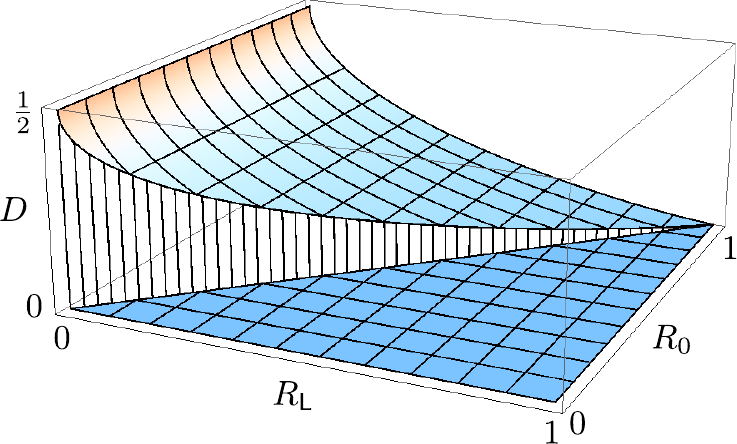}
\caption{\small The region in Theorem~\ref{mainresult} for source distribution {$P_X~\sim~\text{Bern}(1/2)$} and distortion measure $d(x,z)=\mathbf{1}\{x\neq z\}$.}
\label{fig:region}
\end{center}
\end{figure}

\section{Converse (Lossless communication)}
\label{sec:losslessconverse}
The constraint $R\geq H(X)$ is a consequence of the lossless source coding theorem. The constraint on $D$ splits into two cases depending on the relation between $R_0$ and $R_{\sf L}$. If $R_{\sf L} \geq R_0$, then any scheme that Nodes A and B use to achieve lossless compression can be exploited by the eavesdropper and the henchman. Since they can both enumerate the $2^{nR_0}$ possible decryptions of $M$, the henchman can simply send the index of the correct decryption, which results in zero distortion. On the other hand, if $R_{\sf L} < R_0$ then the eavesdropper and the henchman can ignore $M$ altogether and simply use a point-to-point rate-distortion code to describe $X^n$ within distortion $D(R_{\sf L})$. Therefore, regardless of the code that Alice and Bob use for lossless communication, the eavesdropper and the henchman can achieve distortion less than or equal to $D(R_{\sf L})\cdot \mathbf{1}\{R_0>R_{\sf L}\}$.

\section{Achievability (Lossless communication)}
\label{sec:losslessachievability}
Viewing the problem from the perspective of the adversarial entities, we see that the henchman observes the pair $(X^n,M)$ and encodes a message $M_{\sf H}$, and the eavesdropper observes $(M,M_{\sf H})$ and decodes $Z^n$; their goal is to minimize the distortion $d(X^n,Z^n)$. This describes the usual rate-distortion setting with additional information $M$ available at encoder and decoder. In other words, for a given $M=m$, the henchman observes a sequence $X^n$ drawn from a source distribution $P_{X^n|M=m}$ and describes the sequence to the eavesdropper using a rate-limited channel; the conditional distribution $P_{X^n|M=m}$ is the effective source distribution because both the henchman and the eavesdropper know the public message.

Observing that $P_{X^n|M=m}$ is induced entirely by the actions of Node A, let us assume for the moment that Node A uses the following random binning scheme to encode the source sequence. First, randomly divide the set of typical $x^n$ sequences into bins of size $2^{nR_0}$. This binning is known to everyone, including the adversaries. To encode $X^n$, Node A transmits the message {$M=(M_p,M_s)$}, where $M_p$ is the bin containing $X^n$, and $M_s$ is the index within that bin, one-time padded with $K$. Note that the one-time pad renders $M_s$ statistically independent of $X^n$ and $M_s$. Thus, for this choice of encoder, the induced distribution $P_{X^n|M}$ corresponds to choosing a sequence roughly uniformly at random from bin $M_p$ (because of the asymptotic equipartition property). Furthermore, the asymptotic equipartition property and the randomness of the binning suggest that the $2^{nR_0}$ sequences in bin $M_p$ were approximately chosen i.i.d. according to $\prod_{i=1}^n P_X(x_i)$. Therefore, very roughly speaking, the random binning scheme results in a distribution $P_{X^n|M=m}$ that corresponds to selecting a sequence uniformly from a random codebook whose codewords are generated independently and identically according to $\prod_{i=1}^n P_{X}(x_i)$. If this is true, then the joint goal of the henchman and the eavesdropper becomes the following: lossy compression (at rate $R_{\sf L}$) of a codeword drawn uniformly from a random codebook of size $2^{nR_0}$. We now delve into this subproblem, the conclusion of which is the following: if $R_{\sf L}<R_0$, then with high probability it is impossible to achieve distortion less than $D(R_{\sf L})$.

\subsection{Interlude: lossy compression of a codeword drawn uniformly from a random codebook}

%MENTION SOMEWHERE? that this is like standard lossy compression but with the input space $\Xcal^n$ replaced by a codebook. Furthermore, the source sequence is not chosen i.i.d according to $P_X$, but rather is chosen uniformly from the codebook. The fact that the codebook is known to the encoder and decoder is simply saying that they both know the (contracted) input space.

%Generate a random codebook $c_{\Xsf}=\{X^n(1),\ldots,X^n(2^{nR_{\Xsf}})\}$ consisting of $2^{nR_{\Xsf}}$ codewords drawn independently according to $\prod_{i=1}^n P_X(x_i)$.

Consider a codebook $c_{\xsf} = \{x^n(1),\ldots,x^n(2^{nR_{\sf C}})\}$ consisting of $2^{nR_{\sf C}}$ sequences. Select a codeword uniformly at random from $c_{\xsf}$ and denote it by $x^n(J)$, where $J \sim \text{Unif}[2^{nR_{\sf C}}]$. An encoder describes $x^n(J)$ using a noiseless link of rate $R$, and a decoder estimates it with a reconstruction sequence $Z^n$. Both the encoder and decoder know the codebook $c_{\xsf}$. Notice that the relationship between this setting and the standard rate-distortion framework is that the input space is contracted from $\Xcal^n$ to a codebook $c_{\xsf}$, and the source sequence is chosen uniformly at random from $c_{\xsf}$ instead of i.i.d. according to a distribution $P_X$.

%This setting is essentially that of classic rate-distortion theory with some specific side information available at the encoder and decoder. The side information is an unordered list consisting of the source sequence $X^n$ and $(2^{nR_{\Xsf}}-1)$ additional sequences that are i.i.d. according to $\prod_{i=1}^n P_X(x_i)$. (maybe save this interpretation until the end of this subsection?)

\begin{defn}
For a fixed codebook $c_{\xsf} \subseteq \Xcal^n$, define an $(n,c_{\xsf},R)$ code as an encoder $f:\Xcal^n \times c_{\xsf} \rightarrow [2^{nR}]$ and a decoder $g:[2^{nR}]\times c_{\xsf} \rightarrow \Zcal^n$.  
\end{defn}
For a given $D$ and fixed codebook $c_{\xsf}$, the encoder and decoder want to maximize the probability that the distortion between $x^n(J)$ and $Z^n$ is less than $D$:
\begin{equation}
\label{cbdist}
\max_{(n,c_{\xsf},R)\text{ codes}} \Pbb[d(x^n(J),Z^n)\leq D].
\end{equation}
Instead of considering this objective for arbitrary codebooks, we generate a random codebook $\Ccal_{\xsf}$ in which the codewords are drawn independently, each according to $\prod_{i=1}^n P_X(x_i)$. The setup is depicted in Figure~\ref{fig:codebook}.

In some regimes, the expression in~\eqref{cbdist} approaches one as blocklength increases. For example, if $R\geq R_{\sf C}$ then the encoder can simply send the index of $X^n(J)$ within the codebook $\Ccal_{\xsf}$, thus ensuring zero distortion. Another example is when $R\geq R(D)$, in which case distortion $D$ is achievable even without knowledge of $\Ccal_{\xsf}$, because $X^n(J)$ is i.i.d. according to $P_X$. 

The regime we are interested in is when $R < R_{\sf C}$ and~{$R~<~R(D)$}. In this case, we find that with high probability (over the random codebook) it is impossible to achieve distortion $D$, i.e., the expression in \eqref{cbdist} vanishes.

\begin{thm}
\label{cbthm}
Fix $R$, $R_{\sf C}$ and $D$. Let $\Ccal_{\xsf}$ be a random codebook of $2^{nR_{\sf C}}$ codewords, each drawn independently according to $\prod_{i=1}^n P_X(x_i)$. Let $\tau_n$ be any sequence that converges to zero sub-exponentially fast (i.e., $\tau_n=2^{-o(n)}$). If
\begin{equation}
R<\min\{R(D),R_{\sf C}\},
\end{equation}
then
\begin{equation}
\label{cbobj}
\lim_{n\to\infty}\Pbb_{\Ccal_{\xsf}}\Big[ \max_{(n,\Ccal_{\xsf},R)\text{ codes}}  \Pbb[d(X^n(J),Z^n)\leq D] > \tau_n \Big] = 0.
\end{equation}
\end{thm}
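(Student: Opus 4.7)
For a fixed codebook $c_{\xsf}$, an $(n,c_{\xsf},R)$ code is effectively determined by its decoder: given any map $z^n(\cdot):[2^{nR}]\to\Zcal^n$, the best encoder routes each codeword $x^n(j)$ to some $m$ satisfying $d(x^n(j),z^n(m))\leq D$ whenever one exists. Letting $\Acal_z := \bigcup_{m=1}^{2^{nR}}\{x^n:d(x^n,z^n(m))\leq D\}$ and $N_z := |\{j:X^n(j)\in\Acal_z\}|$, the maximum success probability over $(n,\Ccal_{\xsf},R)$ codes equals $\max_{z^n(\cdot)} N_z/2^{nR_{\sf C}}$. The theorem reduces to showing $N_z/2^{nR_{\sf C}}\leq\tau_n$ simultaneously across all decoders, with probability tending to $1$ over $\Ccal_{\xsf}$.

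\textbf{Quantitative rate-distortion converse.} For an arbitrary fixed decoder, I would prove $P_X^n(\Acal_z)\leq 2^{-n\eta}$ for some $\eta>0$ (depending on $R,D,P_X$) and all large $n$, via type counting. Every $P_X$-typical sequence $x^n$ within distortion $D$ of a reconstruction $z^n(m)$ has a joint type $T_{X,Z}$ with $T_X\approx P_X$, $T_Z$ equal to the type of $z^n(m)$, and $\Ebb_T[d(X,Z)]\leq D$. The definition of $R(D)$ forces the mutual information $I(X;Z)$ computed under $T_{X,Z}$ to be at least $R(D)-o(1)$, so the corresponding $z^n(m)$-shell has size at most $2^{n(H(X)-R(D)+o(1))}$. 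Weighting by the $2^{-n(H(X)-o(1))}$ probability of a typical sequence, summing over polynomially many types, and union-bounding across the $2^{nR}$ indices $m$ yields $P_X^n(\Acal_z)\leq 2^{-n(R(D)-R-o(1))}$; the atypical contribution is subexponentially small.

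\textbf{Concentration and union bound over decoders.} Because $N_z$ is a sum of $2^{nR_{\sf C}}$ i.i.d.\ Bernoullis with mean $\mu\leq 2^{n(R_{\sf C}-\eta)}$, setting $t:=\tau_n 2^{nR_{\sf C}}$ makes $t/\mu\geq\tau_n 2^{n\eta}$ grow super-polynomially (using $\tau_n=2^{-o(n)}$), so the Chernoff bound $\Pbb[N_z>t]\leq(e\mu/t)^t$ gives
\[
\Pbb\!\left[N_z>\tau_n 2^{nR_{\sf C}}\right]\leq 2^{-\Theta(n\,\tau_n\,2^{nR_{\sf C}})},
\]
which is doubly exponentially small. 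There are at most $|\Zcal|^{n\cdot 2^{nR}}=2^{n\cdot 2^{nR}\log|\Zcal|}$ candidate decoders, and because $R<R_{\sf C}$ with $\tau_n$ subexponential, the Chernoff exponent $n\tau_n 2^{nR_{\sf C}}$ strictly dominates the log of this count for $n$ large. The union bound closes and the probability in \eqref{cbobj} vanishes.

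\textbf{Main obstacle.} The hard part is the \emph{uniform} converse in the second paragraph: $P_X^n(\Acal_z)\leq 2^{-n\eta}$ must hold for every decoder, including those whose reconstructions have unusual empirical distributions. The type-counting proof handles this cleanly because the conditional-entropy estimate $H(X|Z)\leq H(X)-R(D)+o(1)$ depends only on the feasibility of the joint type (with $T_X\approx P_X$ and $\Ebb_T[d(X,Z)]\leq D$) and is insensitive to $T_Z$.
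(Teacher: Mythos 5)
Your argument follows essentially the same route as the paper's: restrict to the typical set, bound the per-reconstruction probability $\Pbb[d(X^n,z^n)\leq D,\,X^n\in\Tcal_\delta^n]\leq 2^{-n(R(D)-o(1))}$ by the method of types, apply a Chernoff bound to the resulting i.i.d. sum, and finish with a union bound. The one structural difference is where you place the union bound: you enumerate all $|\Zcal|^{n\cdot 2^{nR}}$ decoder maps, while the paper first bounds $\Pbb[\exists z^n\in c_{\zsf}:\,d\leq D]$ by $\sum_{z^n\in c_{\zsf}}(\cdot)$ and thereby only needs to union over the $|\Zcal|^n$ individual sequences $z^n$. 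Both are absorbed by the doubly-exponential Chernoff tail; your observation that $\tau_n 2^{nR_{\sf C}}$ dominates $2^{nR}\log|\Zcal|$ (using $R<R_{\sf C}$) is exactly what makes the coarser union bound survive, though the paper's factoring is cleaner and would still work if the concentration were only singly exponential. One imprecision worth fixing: you write $P_X^n(\Acal_z)\leq 2^{-n\eta}$ and then say the atypical contribution is ``subexponentially small.'' It is in fact \emph{exponentially} small, and you need that: a merely subexponential atypical mass would defeat the claimed $2^{-n\eta}$ bound. The tidy way to handle this (as in the paper) is to split off $\Pbb[X^n(J)\notin\Tcal_\delta^n]$ once at the start and compare it directly to $\tau_n=2^{-o(n)}$, rather than folding it into $P_X^n(\Acal_z)$; your $\Acal_z$ should then carry the intersection with $\Tcal_\delta^n$ throughout the Chernoff step.
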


\begin{proof}

We first provide a brief, informal sketch of the proof idea. For an optimal $(n,\Ccal_{\xsf},R)$ code, there are on average $2^{n(R_{\sf C} - R)}$ codewords in $\Ccal_{\xsf}$ that map to each of the $2^{nR}$ reconstruction sequences in $\Zcal^n$. However, for a given reconstruction sequence $z^n$, there are only (on average) $2^{n(R_{\sf C}-R(D))}$ sequences in $\Ccal_{\xsf}$ within distortion $D$ of $z^n$, because the probability of an i.i.d sequence $X^n$ being within distortion $D$ of $z^n$ is roughly $2^{-nR(D)}$. Since $2^{n(R_{\sf C}-R(D))}$ is much smaller than $2^{n(R_{\sf C} - R)}$, the probability that $z^n$ yields distortion less than D is vanishingly small. In fact, this probability decays doubly exponentially, which means that the entire suite of $2^{nR}$ reconstruction sequences simultaneously yields distortion greater than $D$ with high probability. In other words, the optimal code gives rise to distortion greater than $D$ with high probability, which is what we want to show.

The first step is to restrict $X^n(J)$ to the $\delta$-typical set $\Tcal_{\delta}^n(X)$ by writing
%\begin{IEEEeqnarray}{l}
%\nonumber\Pbb[d(X^n(J),Z^n)\leq D] \\
%\label{typical}\quad\quad\leq \Pbb[d(X^n(J),Z^n)\leq D, \Acal] + \Pbb[\Acal^c],
%\end{IEEEeqnarray}
\begin{equation}
\label{typical} \Pbb[d(X^n(J),Z^n)\leq D] \leq \Pbb[d(X^n(J),Z^n)\leq D, \Acal] + \Pbb[\Acal^c],
\end{equation}
where $\Acal$ denotes the event $\{X^n(J)\in \Tcal_{\delta}^n \}$. The $\delta$-typical set is defined according to the notion of strong typicality:
\begin{equation}
\Tcal_{\delta}^n(X) \triangleq \{x^n\in\Xcal^n: \lVert T_{x^n} - P_X \rVert_{\sf TV} < \delta\},
\end{equation}
where $T_{x^n}$ denotes the empirical distribution (i.e., the type) of $x^n$.
We will choose an appropriate $\delta$ later. Note that the second term in~\eqref{typical} vanishes in the limit for any $\delta>0$ since $X^n(J)$ is i.i.d. according to $P_X$.

\begin{figure}
\begin{center}
\begin{tikzpicture}
 [node distance=1cm,minimum width=1cm,minimum height =.75 cm]
  \node[rectangle,minimum width=5mm] (source) {$X^n(J)$};
  \node[node] (alice) [right =7mm of source] {Enc.};
  \node[node] (bob) [right =2cm of alice] {Dec.};
  \node[coordinate] (dummy) at ($(alice.east)!0.5!(bob.west)$) {};
  \node[rectangle,minimum width=5mm] (xhat) [right =7mm of bob] {$Z^n$};
  \node[rectangle,minimum width=7mm] (key) [above =7mm of dummy] {$\Ccal_{\xsf}$};
  
  \draw [arw] (source) to (alice);
  \draw [arw] (alice) to node[minimum height=6mm,inner sep=0pt,midway,above]{$R$} (bob);
  \draw [arw] (bob) to (xhat);
  \draw [arw] (key) to [out=180,in=90] (alice);
  \draw [arw] (key) to [out=0,in=90] (bob);
 \end{tikzpicture}
 \caption{\small Lossy compression of a codeword drawn uniformly from a random codebook $\Ccal_{\xsf}=\{X^n(1),\ldots,X^n(2^{nR_{\sf C}})\}$. Both the encoder and decoder know the codebook $\Ccal_{\xsf}$, and the encoder must describe a randomly chosen codeword $X^n(J)$, where $J\sim \text{Unif}[2^{nR_{\sf C}}]$.}
 \label{fig:codebook}
 \end{center}
 %\vspace{-5mm}
 \end{figure}
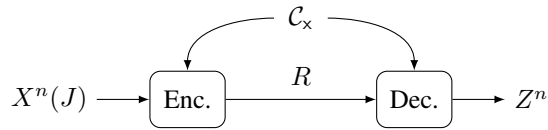

Although we defined a $(n,\Ccal_{\xsf},R)$ code as an encoder-decoder pair $(f,g)$, we will benefit from viewing a code as the combination of a codebook of $z^n$ sequences and an encoder that is optimal for that codebook. In other words, treat an $(n,\Ccal_{\xsf},R)$ code as a codebook $c_{\zsf}\subseteq \Zcal^n$ of size $2^{nR}$, together with an encoder that maps $x^n \in \Ccal_{\xsf}$ to the $z^n\in c_{\zsf}$ with the lowest distortion $d(x^n,z^n)$. This allows us to write
\begin{IEEEeqnarray}{l}
\nonumber \max_{(n,\Ccal_{\xsf},R)\text{ codes}}  \Pbb[d(X^n(J),Z^n)\leq D,\Acal] \\
\label{codeequiv}\quad\quad = \max_{c_{\zsf}(\Ccal_{\xsf})}  \Pbb\Big[\min_{z^n \in {c_{\zsf}(\Ccal_{\Xsf})} } d(X^n(J),z^n)\leq D,\Acal\Big],
\end{IEEEeqnarray}
where the notation ${c_{\zsf}(\Ccal_{\xsf})}$ emphasizes that $c_{\zsf}$ is a function of the random codebook $\Ccal_{\xsf}$; for simplicity, we suppress the $n$ and $R$ parameters of $c_{\zsf}(\Ccal_{\xsf})$.

Now we apply a union bound to the right-hand side of \eqref{codeequiv} and write
\begin{IEEEeqnarray}{rCl}
\Pbb\Big[\min_{z^n \in {c_{\zsf}(\Ccal_{\xsf})} } d(X^n(J),z^n)\leq D,\Acal\Big]
&\stackrel{(a)}{\leq}& \sum_{z^n\in c_{\zsf}(\Ccal_{\xsf})} \Pbb\Big[ d(X^n(J),z^n)\leq D,\Acal\Big]\\
&\leq& 2^{nR} \max_{z^n\in c_{\zsf}(\Ccal_{\xsf})} \Pbb\Big[ d(X^n(J),z^n)\leq D,\Acal\Big]\\
&\leq& 2^{nR} \max_{z^n\in \Zcal^n} \Pbb\Big[ d(X^n(J),z^n)\leq D,\Acal\Big]\\
\label{rvdenote} &\stackrel{(b)}{=}& 2^{-n(R_{\sf C} - R)} \max_{z^n\in \Zcal^n} \sum_{j=1}^{2^{nR_{\sf C}}} \mathbf{1}\{ d(X^n(j),z^n)\leq D, X^n(j) \in \Tcal_{\delta}^n \}, 
\end{IEEEeqnarray}
where step (a) is a union bound, and step (b) uses the fact that $X^n(J)$ is chosen uniformly from $\Ccal_{\xsf}$. Notice that for a fixed $z^n$, the terms in the sum in \eqref{rvdenote} are i.i.d. random variables (due to the nature of the random codebook construction), which we henceforth denote by $\xi_{j,z^n}$:
\begin{equation}
\label{xij}
\xi_{j,z^n} \triangleq \mathbf{1}\{ d(X^n(j),z^n)\leq D, X^n(j) \in \Tcal_{\delta}^n \},\quad j=1,\ldots,2^{nR_{\sf C}}.
\end{equation}

%The probability on the right-hand side of~\eqref{codeequiv} can be seen as the probability that $X^n(J)$ is covered by the codebook $c_{\ysf}(\Ccal_{\xsf})$. More precisely, define the ball of distortion $D$ around a sequence $y^n$ as
%\begin{equation}
%B(y^n,D,\Ccal_{\xsf}) \triangleq \{x^n\in\Ccal_{\xsf}: d(x^n,y^n)\leq D, x^n\in \Tcal_{\delta}^n \}.
%\end{equation}
%A codeword $x^n\in\Ccal_{\xsf}$ is said to be covered if it lies in the union of $D$-balls over all $y^n\in c_{\ysf}(\Ccal_{\xsf})$. Since $X^n(J)$ was chosen uniformly from $\Ccal_{\xsf}$, the probability that it is covered by $c_{\ysf}(\Ccal_{\xsf})$ is simpy the fraction of $\Ccal_{\xsf}$ that is covered. Therefore, we have
%\begin{IEEEeqnarray}{rCl}
%\IEEEeqnarraymulticol{3}{l}{\nonumber
%\Pbb\Big[\min_{y^n \in {c_{\ysf}(\Ccal_{\xsf})} } d(X^n(J),y^n)\leq D,\Acal\Big]
%}\\
%\quad &=& 2^{-n R_{\Xsf}} \cdot \Bigg | \bigcup_{y^n \in c_{\ysf}(\Ccal_{\xsf})} B(y^n,D,\Ccal_{\xsf}) \Bigg |\\
%\label{coveredfrac} &\leq& 2^{-n(R_{\Xsf}-R)} \max_{y^n\in\Ycal^n} |B(y^n,D,\Ccal_{\xsf})|.
%\end{IEEEeqnarray}

Using the equality in \eqref{codeequiv} and the bound in \eqref{rvdenote}, we have
\begin{IEEEeqnarray}{rCl}
\IEEEeqnarraymulticol{3}{l}{\nonumber
\Pbb\Big[ \max_{(n,\Ccal_{\xsf},R)\text{ codes}}  \Pbb[d(X^n(J),Z^n)\leq D,\Acal] > \tau_n \Big]
}\\
\quad&\leq& \Pbb\Big[\max_{z^n\in \Zcal^n} \sum_{j=1}^{2^{nR_{\sf C}}} \xi_{j,z^n} > \tau_n 2^{n(R_{\sf C}-R)}\Big] \\
\label{finalprob}&\stackrel{(a)}{\leq}& |\Zcal|^n \max_{z^n\in\Zcal^n} \Pbb\Big[\sum_{j=1}^{2^{nR_{\sf C}}} \xi_{j,z^n} > \tau_n 2^{n(R_{\sf C}-R)}\Big],
\end{IEEEeqnarray}
where (a) is a union bound. If we can show that the probability in \eqref{finalprob} decays doubly exponentially fast with $n$, then the proof will be complete. To that end, we first use a standard application of the method of types \cite{Csiszar1998} to establish a bound on the expected value of $\xi_{j,z^n}$ in the following lemma. The proof is relegated to the appendix.

%Substituting~\eqref{codeequiv} and~\eqref{coveredfrac} into \eqref{cbobj} gives
%\begin{IEEEeqnarray}{rCl}
%\IEEEeqnarraymulticol{3}{l}{\nonumber
%\Pbb\Big[ \max_{(n,\Ccal_{\xsf},R)\text{ codes}}  \Pbb[d(X^n(J),Y^n)\leq D] > \tau_n \Big]
%}\\
%\quad&\leq& \Pbb\Big[\max_{y^n\in\Ycal^n} |B(y^n,D,\Ccal_{\xsf})| > \tau_n 2^{n(R_{\Xsf}-R)}\Big] \\
%\label{finalprob}&\leq& |\Ycal|^n \max_{y^n\in\Ycal^n} \Pbb\Big[|B(y^n,D,\Ccal_{\xsf})| > \tau_n 2^{n(R_{\Xsf}-R)}\Big],
%\end{IEEEeqnarray}
%where the last inequality is a union bound. Once we show that the probability in~\eqref{finalprob} decays doubly exponentially fast with $n$, the proof will be complete. To that end, first observe that $|B(y^n,D,\Ccal_{\xsf})|$ is a sum of i.i.d. Bernoulli random variables:
%\begin{equation}
%\label{ball}
%|B(y^n,D,\Ccal_{\xsf})| = \sum_{j=1}^{2^{nR_{\Xsf}}} \mathbf{1}\{ d(X^n(j),y^n)\leq D, X^n(j) \in \Tcal_{\delta}^n \}.
%\end{equation}
%Furthermore, using a standard application of the method of types \cite{Csiszar1998}, one can show the following lemma. 

\begin{lemma}
\label{typebound}
If $X^n$ is i.i.d. according to $P_X$, then for any $z^n$,
\begin{equation}
\Pbb[d(X^n,z^n)\leq D, X^n \in \Tcal_{\delta}^n] \leq 2^{-n(R(D)-o(1))},
\end{equation}
where $R(D)$ is the point-to-point rate-distortion function for $P_X$, and $o(1)$ is a term that vanishes as $\delta\rightarrow 0$ and $n\rightarrow \infty$.
\end{lemma}

From Lemma~\ref{typebound}, we see that the expected value of $\sum_{j=1}^{2^{nR_{\sf C}}} \xi_{j,z^n}$ is bounded above by approximately $2^{n(R_{\sf C}-R(D))}$. Moreover, since a condition of the theorem being proved is that $R<R(D)$, it follows that $ \tau_n 2^{n(R_{\sf C}-R)}$ is exponentially larger than $2^{n(R_{\sf C}-R(D))}$. Therefore, \eqref{finalprob} is concerned with the probability that a sum of $2^{nR_{\sf C}}$ i.i.d. Bernoulli random variables is exponentially far away from its mean. Such a probability decays at a doubly exponential rate, as the following Chernoff bound will imply.
%Lemma~\ref{typebound} and~\eqref{ball} imply that the expected value of $|B(y^n,D,\Ccal_{\xsf})|$ is bounded above by approximately $2^{n(R_{\Xsf}-R(D))}$. Since the conditions of the theorem are $R<R(D)$ and $R < R_{\Xsf}$, we see that $ \tau_n 2^{n(R_{\Xsf}-R)}$ is exponentially larger than $2^{n(R_{\Xsf}-R(D))}$. Therefore, \eqref{finalprob} is concerned with the probability that a sum of $2^{nR_{\Xsf}}$ i.i.d. Bernoulli random variables is exponentially far away from its mean. Such a probability decays at a doubly exponential rate, as the following Chernoff bound implies.
\begin{lemma}
\label{chernoff}
If $X^m$ is a sequence of i.i.d. $\text{Bern}(p)$ random variables, then
\begin{equation}
\Pbb\Big[\sum_{i=1}^m X_i > k\Big] \leq \left(  \frac{e\!\cdot\! m \!\cdot\! p}{k} \right )^k.
\end{equation}
\end{lemma}

\begin{proof}
The proof follows some of the usual steps for establishing Chernoff bounds.
\begin{IEEEeqnarray}{rCl}
\label{chernoffcorstep}\Pbb\Big[\sum_{i=1}^m X_i > k\Big] &\leq& \min_{\lambda > 0} e^{- \lambda k} \prod_{i=1}^m \Ebb[e^{\lambda X_i}] \\
&=& \min_{\lambda > 0} e^{- \lambda k} (p\cdot e^{\lambda} + 1 - p)^m \\
&\leq& \min_{\lambda > 0} e^{- \lambda k} (p\cdot e^{\lambda} + 1)^m \\
&\leq& \min_{\lambda > 0} e^{- \lambda k} e^{mpe^{\lambda}}
\end{IEEEeqnarray}
Substituting the minimizer $\lambda^* = \ln(\frac{k}{mp})$ gives the desired bound.
\end{proof}

%The next corollary extends the Chernoff bound in Lemma~\ref{chernoff} to random variables taking values on the interval $[0,a]$ instead of just binary random variables. This corollary will be necessary for a generalization of the codebook compression problem that we encounter later.

%\begin{cor}
%\label{chernoffcor1}
%If $X^m$ is a sequence of i.i.d. random variables on the interval $[0,1]$ with $\Ebb[X_i] = p$, then
%\begin{equation}
%\Pbb\Big[\sum_{i=1}^m X_i > k\Big] \leq \left(  \frac{e\!\cdot\! m \!\cdot\! p}{k} \right )^k.
%\end{equation}
%\end{cor}
%
%\begin{proof}
%If $X\in[0,1]$ and $Y\in\{0,1\}$ are random variables such that $\Ebb[X] = \Ebb[Y]$ and $f: [0,1] \rightarrow \Rbb$ is convex, then
%\begin{equation}
%\Ebb[f(X)] \leq \Ebb[f(Y)].
%\end{equation}
%To see this, first observe that for $x\in[0,1]$, then
%\begin{equation}
%f(x) \leq x f(1) + (1-x) f(0).
%\end{equation}
%Taking expectations gives
%\begin{IEEEeqnarray}{rCl}
%\Ebb[f(X)]  &\leq& \Ebb[X] f(1) + (1-\Ebb[X]) f(0) \\
%&=& \Ebb[Y] f(1) + (1-\Ebb[Y]) f(0) \\
%&=& \Ebb[f(Y)].
%\end{IEEEeqnarray}
%Since $f(x) = e^{\lambda x}$ is convex, the inequality $\Ebb[e^{\lambda Y}] \leq \Ebb[e^{\lambda X}]$ holds and can be substituted into the proof of Lemma~\ref{chernoff}. 
%\end{proof}

Using the bound on $\Ebb[\xi_{j,z^n}]$ from Lemma~\ref{typebound}, we can apply Lemma~\ref{chernoff} to the probability in \eqref{finalprob} by identifying
\begin{IEEEeqnarray}{rCl}
m &=& 2^{n R_{\sf C}}\\
p &\leq& 2^{-n(R(D)-o(1))}\\ 
k &=& \tau_n 2^{n (R_{\sf C}-R)}.
\end{IEEEeqnarray}
 This gives
\begin{equation}
\label{doubleexp}
\Pbb\Big[\sum_{j=1}^{2^{nR_{\sf C}}} \xi_{j,z^n} > \tau_n 2^{n(R_{\sf C}-R)}\Big] \leq 2^{-n\alpha2^{n\beta}},
\end{equation}
where
\begin{IEEEeqnarray}{rCl}
\alpha &=& R(D) - R -o(1)\\
\beta &=& R_{\sf C}-R-o(1).
\end{IEEEeqnarray}
For small enough $\delta$ and large enough $n$, both $\alpha$ and $\beta$ are positive and bounded away from zero, and \eqref{doubleexp} vanishes doubly exponentially fast. Consequently, the expression in \eqref{finalprob} vanishes, completing the proof of Theorem~\ref{cbthm}.
\end{proof}
One can readily establish the following corollary to Theorem~\ref{cbthm}. 
\begin{cor}
\label{cbcor}
If $R < R_{\sf C}$ and $R < R(D)$, then
\begin{equation}
\lim_{n\to\infty}\Ebb_{\Ccal_{\xsf}}\Big[ \min_{(n,\Ccal_{\xsf},R)\text{ codes}}  \Pbb[d(X^n(J),Y^n)\geq D]\Big] = 1.
\end{equation}
\end{cor}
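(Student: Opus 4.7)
The plan is to derive this corollary directly from Theorem~\ref{cbthm} via a simple bounded-convergence argument. The quantity inside the outer expectation is a bounded random variable lying in $[0,1]$, so it suffices to show that it converges to $1$ in probability (over the random codebook), and the expectation will follow.

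First, I would connect $\Pbb[d(X^n(J),Z^n)\geq D]$ to $\Pbb[d(X^n(J),Z^n)\leq D]$, which is the quantity controlled by Theorem~\ref{cbthm}. Since $\{d \geq D\}$ is the complement of $\{d < D\}$ and $\{d < D\}\subseteq\{d \leq D\}$, for every fixed codebook $c_{\xsf}$ and every $(n,c_{\xsf},R)$ code we have
\begin{equation}
\Pbb[d(X^n(J),Z^n)\geq D] \;\geq\; 1 - \Pbb[d(X^n(J),Z^n)\leq D].
\end{equation}
Taking the minimum on the left over all codes corresponds to taking the maximum of the subtracted term on the right, so
\begin{equation}
\min_{(n,c_{\xsf},R)\text{ codes}}\Pbb[d(X^n(J),Z^n)\geq D] \;\geq\; 1 - \max_{(n,c_{\xsf},R)\text{ codes}}\Pbb[d(X^n(J),Z^n)\leq D].
\end{equation}

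Second, I would pick any sub-exponentially vanishing sequence $\tau_n$ (for concreteness, $\tau_n = 1/n$) and apply Theorem~\ref{cbthm}, which says that under the hypothesis $R<\min\{R(D),R_{\sf C}\}$ the event
\begin{equation}
\Ecal_n \triangleq \Big\{\max_{(n,\Ccal_{\xsf},R)\text{ codes}}\Pbb[d(X^n(J),Z^n)\leq D] \leq \tau_n\Big\}
\end{equation}
has $\Pbb_{\Ccal_{\xsf}}(\Ecal_n)\to 1$. On $\Ecal_n$, the displayed inequality above guarantees that the inner minimum is at least $1-\tau_n$; off $\Ecal_n$ it is trivially at least $0$. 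Taking expectation over the random codebook,
\begin{equation}
\Ebb_{\Ccal_{\xsf}}\Big[\min_{(n,\Ccal_{\xsf},R)\text{ codes}}\Pbb[d(X^n(J),Z^n)\geq D]\Big] \;\geq\; (1-\tau_n)\,\Pbb_{\Ccal_{\xsf}}(\Ecal_n) \;\longrightarrow\; 1.
\end{equation}
Since the expectation is trivially upper bounded by $1$, the limit equals $1$, which is the claim.

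There is no real obstacle here; the corollary is essentially a repackaging of Theorem~\ref{cbthm} as a statement about expectations rather than high-probability events. The only two mildly nontrivial steps are the passage between $\Pbb[d\leq D]$ and $\Pbb[d\geq D]$ (which loses nothing because of the direction of the inequality after taking complements), and the mild fact that a $[0,1]$-valued random variable that tends to $1$ in probability also tends to $1$ in expectation.
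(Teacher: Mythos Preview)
Your argument is correct and is exactly the kind of routine deduction the paper has in mind when it says the corollary ``can readily'' be established from Theorem~\ref{cbthm}; the paper omits the details entirely. The two mild points you flag (complementing $\{d\leq D\}$ and passing from convergence in probability to convergence in expectation for a $[0,1]$-valued random variable) are handled correctly.
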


%\begin{proof}
%If $X\in[0,1]$, then for any $\eps$,
%\begin{IEEEeqnarray}{rCl}
%\Ebb[X] &=& \Ebb[X | X > \eps] \cdot \Pbb[X > \eps] + \Ebb[X | X \leq \eps] \cdot \Pbb[X \leq \eps] \\
% &\leq& \Pbb[X > \eps] + \eps.
%\end{IEEEeqnarray}
%\end{proof}

The interlude is now complete, and we can return to the achievability proof of Theorem~\ref{mainresult}.

\subsection{Likelihood encoder}
Earlier, we asserted that a scheme similar to random binning might give rise to an induced distribution $P_{X^n|M=m}$ that could be approximated by drawing a codeword uniformly from a random codebook. Then we could apply Corollary~\ref{cbcor} to our problem by identifying $(R_{\sf C},R)$ with $(R_0,R_{\sf L})$. Although it is possible that an encoder using random binning might yield this distribution, we turn instead to a likelihood encoder with a random codebook because it brings considerable clarity to the induced distributions involved.

Consider a codebook $c=\{x^n(m,k)\}$ consisting of $2^{n(R+R_0)}$ sequences from $\Xcal^n$. The likelihood encoder of \cite{Song2014} for lossless reconstruction and for this codebook is a stochastic encoder defined by
%\vspace{-2mm}
\begin{equation}
P_{M|X^nK}(m|x^n,k) \propto \prod_{i=1}^n \mathbf{1}\{x_i = x_i(m,k)\},
\end{equation}
where $\propto$ indicates that appropriate normalization is required.\footnote{In the rare case that no codeword is equal to the source sequence, an arbitrary index can be chosen.} The merit of using a likelihood encoder with a random codebook is that the resulting system-induced joint distribution of $(X^n,M,K)$, namely $P_{X^nMK}=P_{X^n}P_{K}P_{M|X^nK}$, can be shown to be close to an idealized distribution $Q_{X^nMK}$ defined by
%\vspace{-2mm}
\begin{equation}
Q_{X^nMK}(x^n,m,k) \triangleq 2^{-n(R+R_0)}\prod_{i=1}^n \mathbf{1}\{x_i = x_i(m,k)\}.
\end{equation}
More precisely, one can use the soft covering lemma~{\cite[Lemma IV.1]{Cuff2013}} to prove the following.
\begin{lemma}
\label{distrclose}
Let $\Ccal=\{X^n(m,k)\},{(m,k)\in[2^{nR}]\times[2^{nR_0}]}$ be a random codebook with each codeword drawn independently according to $\prod_{i=1}^n P_X$. If $R>H(X)$, then
\begin{equation}
\label{distrtv}
\lim_{n\to\infty} \Ebb_{\Ccal} \big\lVert P_{X^nMK} - Q_{X^nMK} \big\rVert_{\sf TV} = 0,
\end{equation}
where the expectation is with respect to the random codebook and $\lVert \cdot \rVert_{\sf TV}$ is total variation distance.
\end{lemma}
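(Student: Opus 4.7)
The plan is to observe that the likelihood encoder is by design equal to the conditional $Q_{M|X^nK}$ derived from the idealized joint, so that via Property~\ref{tvproperties}(c) the problem reduces to comparing the $(X^n,K)$-marginals, which then further reduces to an expected conditional total variation distance that is handled directly by the soft covering lemma on each sub-codebook indexed by $k$.

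First I would compute $Q_{M|X^nK}$ from the definition of $Q_{X^nMK}$. Summing over $m$ gives
\begin{equation}
Q_{X^nK}(x^n,k) = 2^{-n(R+R_0)}\sum_m \mathbf{1}\{x^n = x^n(m,k)\},
\end{equation}
so $Q_{M|X^nK}(m|x^n,k) \propto \mathbf{1}\{x^n = x^n(m,k)\}$, which is exactly the likelihood encoder $P_{M|X^nK}$. Since $P_K = Q_K$ is uniform on $[2^{nR_0}]$ and the two joints share the common conditional $P_{M|X^nK}$, applying Property~\ref{tvproperties}(c) and then conditioning on $K$ yields
\begin{equation}
\|P_{X^nMK} - Q_{X^nMK}\|_{\sf TV} = \|P_{X^nK} - Q_{X^nK}\|_{\sf TV} = \Ebb_{K}\big\|P_{X^n} - Q_{X^n|K}\big\|_{\sf TV},
\end{equation}
where the last equality uses the common (uniform) $K$-marginal together with the countable-support expression in Property~\ref{tvproperties}(a).

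Next I would identify each conditional $Q_{X^n|K=k}$ as the distribution on $\Xcal^n$ induced by drawing $M$ uniformly from $[2^{nR}]$ and outputting the codeword $X^n(M,k)$ from the sub-codebook $\{X^n(m,k)\}_{m\in[2^{nR}]}$. Each sub-codebook is an i.i.d.\ collection of $2^{nR}$ sequences drawn according to $\prod_{i=1}^n P_X$, so the soft covering lemma \cite[Lemma IV.1]{Cuff2013}, whose rate hypothesis $R > H(X)$ is precisely the assumption of Lemma~\ref{distrclose}, gives
\begin{equation}
\Ebb_{\Ccal}\big\|P_{X^n} - Q_{X^n|K=k}\big\|_{\sf TV} \xrightarrow{n\to\infty} 0
\end{equation}
for every $k$. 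Swapping the order of expectations (Fubini applies trivially, since all quantities are bounded and nonnegative) and combining with the previous display completes the proof.

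The most delicate point of the argument is the opening identification of the likelihood encoder with $Q_{M|X^nK}$ and the resulting collapse via Property~\ref{tvproperties}(c); once this algebraic observation is in place, the remainder is simply the soft covering lemma applied in parallel across the $2^{nR_0}$ sub-codebooks, so I do not anticipate any genuine obstacles.
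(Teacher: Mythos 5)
Your proof is correct and follows essentially the same route as the paper's: identify $P_{M|X^nK}=Q_{M|X^nK}$, collapse via Property~\ref{tvproperties}(c) to the $(X^n,K)$ marginal, peel off the common uniform $K$ to obtain $\Ebb_K\lVert P_{X^n}-Q_{X^n|K}\rVert_{\sf TV}$, and then apply the soft covering lemma (with the identity channel, hence the threshold $R>H(X)$) to each size-$2^{nR}$ sub-codebook indexed by $k$. The only difference is cosmetic—you derive $Q_{M|X^nK}$ explicitly rather than asserting the equality, and you phrase the final swap of expectations as Fubini—but the decomposition and the key lemma are identical.
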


\begin{proof}
From the definition of $P_{X^nMK}$ and $Q_{X^nMK}$ we have $P_{M|X^nK}=Q_{M|X^nK}$. Using this fact, we have
\begin{IEEEeqnarray}{rCl}
\Ebb_{\Ccal} \big\lVert P_{X^nMK} - Q_{X^nMK} \big\rVert_{\sf TV} &\stackrel{(a)}{=}& \Ebb_{\Ccal} \big\lVert P_{X^nK} - Q_{X^nK} \big\rVert_{\sf TV} \\
&=& \Ebb_{\Ccal} \big\lVert P_{X^n}P_{K} - Q_{X^n|K}P_{K} \big\rVert_{\sf TV}\\
&=& 2^{-nR_0} \sum_{k=1}^{2^{nR_0}} \Ebb_{\Ccal} \big\lVert P_{X^n} - Q_{X^n|K=k} \big\rVert_{\sf TV},
\end{IEEEeqnarray}
where (a) uses Property~\ref{tvproperties}c. Since $R > H(X)$, the soft covering lemma implies that the summands vanish\footnote{Furthermore, they vanish uniformly for all $k \in [2^{nR_0}]$.}:
\begin{equation}
\lim_{n\to\infty} \Ebb_{\Ccal} \big\lVert P_{X^n} - Q_{X^n|K=k} \big\rVert_{\sf TV} = 0.
\end{equation}
Without getting into the details of the soft covering lemma, it is worthwhile to briefly summarize the main idea. The lemma, which is expounded upon in \cite{Cuff2013}, i The soft covering lemma applies to the current proof because $Q_{X^n|K=k}$ is the output distribution induced by a memoryless channel acting on a random codebook of size $2^{nR}$, and $P_{X^n}$ is an i.i.d. distribution. Since we are considering lossless communication in this section, the relevant channel is the noiseless identity channel, and the relevant rate condition is $R>H(X)$.
\end{proof}

Lemma~\ref{distrclose} and the definition of total variation distance allow us to analyze the probability in~\eqref{mainobj} as if $Q_{X^nMK}$ were the true system-induced joint distribution instead of $P_{X^nMK}$. This is important because $Q_{X^n|M=m}$ is, as desired, uniform over a random codebook of size $2^{nR_0}$:
%\vspace{-1mm}
\begin{equation}
\label{idealprop}
Q_{X^n|M=m} = \text{Unif}\{X^n(m,1),\ldots,X^n(m,2^{nR_0})\}.
\end{equation}
To see the role of $Q_{X^n|M=m}$, first denote (for the sake of brevity) the event
\begin{equation}
\Ecal = \{ d(X^n, z^n(M,M_{\sf H})) \geq D(R_{\sf L}) - \eps \}.
\end{equation}
Our objective is to show that when {$R_{\sf L} < R_0$}, Nodes A and B can force the eavesdropper to incur distortion $D(R_{\sf L})$, i.e., there exists a sequence of codes that ensures \eqref{mainobj}.

Taking the expectation of \eqref{mainobj} with respect to a random codebook, we have
\begin{IEEEeqnarray}{rCl}
\IEEEeqnarraymulticol{3}{l}{\nonumber
\Ebb_{\Ccal} \Big[\min_{\substack{m_{\sf H}(x^n,m),z^n(m,m_{\sf H}):\\|\Mcal_{\sf H}|\leq2^{nR_{\sf L}}}} \Pbb_{P_{X^nM}}
[\Ecal]\Big]}\\
\qquad &\stackrel{(a)}{=}& \Ebb_{\Ccal} \Big[\min_{\substack{m_{\sf H}(x^n,m),z^n(m,m_{\sf H}):\\|\Mcal_{\sf H}|\leq2^{nR_{\sf L}}}} \Pbb_{Q_{X^nM}} [\Ecal] \Big] + o(1) \\
&=& \Ebb_{\Ccal} \Big[\min_{\substack{m_{\sf H}(x^n,m),z^n(m,m_{\sf H}):\\|\Mcal_{\sf H}|\leq2^{nR_{\sf L}}}} \Ebb \big[\Pbb_{Q_{X^nM}} [\Ecal | M]\big] \Big] + o(1)\\
&\stackrel{(b)}{=}& \label{relatecor} \Ebb_M \,\Ebb_{\Ccal} \Big[\min_{\substack{m_{\sf H}(x^n),z^n(m_{\sf H}):\\|\Mcal_{\sf H}|\leq2^{nR_{\sf L}}}} \Pbb_{Q_{X^n|M}} [\Ecal | M] \Big] + o(1).
\end{IEEEeqnarray}
In step (a), we use Lemma~\ref{distrclose} to change the underlying distribution from $P_{X^nM}$ to $Q_{X^nM}$ with vanishing penalty. Step (b) uses the fact that $M$ and $\Ccal$ are independent under $Q_{X^nM}$. These steps bring us to the problem considered in the recent interlude: we must show that the henchman and the eavesdropper cannot design a code that achieves distortion $D(R_{\sf L})$ for the ``source" $Q_{X^n|M=m}$. 

Suppose that we are in the regime {$R_{\sf L}<R_0$}. The expression in \eqref{relatecor} is exactly what is addressed by Corollary~\ref{cbcor}, because the conditional distribution $Q_{X^n|M=m}$ corresponds to selecting a codeword uniformly from a random codebook of size $2^{nR_0}$ (as noted in \eqref{idealprop}), and $R_{\sf L}$ is the rate of the message sent from the ``encoder" (henchman) to the ``decoder" (eavesdropper). Also, note that we are invoking the corollary with $D=D(R_{\sf L}) - \eps$; thus, $R_{\sf L} < R(D)$ is satisfied.  Hence, Corollary~\ref{cbcor} gives
\begin{equation}
\lim_{n\to\infty}\Ebb_{\Ccal} \Big[\min_{\substack{m_{\sf H}(x^n),z^n(m_{\sf H}):\\|\Mcal_{\sf H}|\leq2^{nR_{\sf L}}}} \Pbb_{Q_{X^n|M=m}} [\Ecal | M=m] \Big] = 1.
\end{equation}
The likelihood encoder also provides the required lossless communication between Nodes A and B; it is straightforward to show that~\eqref{distrtv} implies vanishing probability of error if the decoder is defined by $g(m,k)=x^n(m,k)$. Indeed,
\begin{IEEEeqnarray}{rCl}
\Ebb_{\Ccal}\, \Pbb[X^n \neq \widehat{X}^n] &=& \Ebb_{\Ccal}\, \Pbb_{P_{X^nMK}}[X^n \neq X^n(M,K)] \\
&\stackrel{(a)}{=}&  \Ebb_{\Ccal}\, \Pbb_{Q_{X^nMK}}[X^n \neq X^n(M,K)] + o(1)\\
&=& 0 + o(1),
\end{IEEEeqnarray}
where step (a) follows from Lemma~\ref{distrclose} and the definition of total variation distance. 

We can conclude that there exists a codebook such that the associated likelihood encoder ensures~\eqref{mainobj} and lossless communication, because both hold when averaged over random codebooks. This completes the achievability portion of the proof of Theorem~\ref{mainresult}.

\section{Lossy communication}
\label{sec:mainlossy}
We now generalize the problem to allow distortion at the legitimate receiver. As depicted in Figure~\ref{fig:scslossy}, the receiver produces a reconstruction sequence $Y^n$, whose distortion is measured by $d(X^n,Y^n)$. Since there are two distortion measures (one for the receiver and one for the eavesdropper), we will distinguish them by using subscripts ${\sf B}$ and ${\sf E}$.

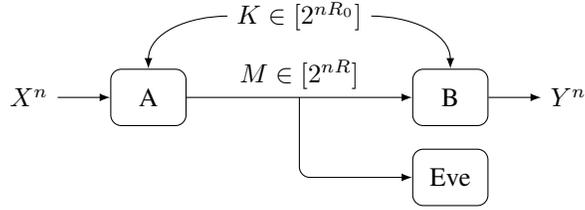
\begin{figure}
\begin{center}
\begin{tikzpicture}
 [node distance=1cm,minimum width=1cm,minimum height =.75 cm]
  \node[rectangle,minimum width=5mm] (source) {$X^n$};
  \node[node] (alice) [right =7mm of source] {A};
  \node[node] (bob) [right =3cm of alice] {B};
  \node[coordinate] (dummy) at ($(alice.east)!0.5!(bob.west)$) {};
  \node[rectangle,minimum width=5mm] (xhat) [right =7mm of bob] {$Y^n$};
  \node[rectangle,minimum width=7mm] (key) [above =7mm of dummy] {$K\in[2^{nR_0}]$};
  \node[node] (eve) [below =3mm of bob] {Eve};
  
  \draw [arw] (source) to (alice);
  \draw [arw] (alice) to node[minimum height=6mm,inner sep=0pt,midway,above]{$M\in[2^{nR}]$} (bob);
  \draw [arw] (bob) to (xhat);
  \draw [arw] (key) to [out=180,in=90] (alice);
  \draw [arw] (key) to [out=0,in=90] (bob);
  \draw [arw,rounded corners] (dummy) |- (eve);
 \end{tikzpicture}
 \caption{\small Lossy communication. Secrecy is measured by the minimum distortion in a list of reconstruction sequences $\{Z^n(1),\ldots,Z^n(2^{nR_{\sf L}})\}$ that the eavesdropper produces. There are two distortion functions at play, $d_{\sf B}(x,y)$ and $d_{\sf E}(x,z)$.}
 \label{fig:scslossy}
 \end{center}
 %\vspace{-5mm}
 \end{figure}

\begin{defn}
\label{listdefnlossy}
The tuple $(R,R_0,R_{\sf L},D_{\sf B},D_{\sf E})$ is achievable if there exists a sequence of $(n,R,R_0)$ codes such that $\forall \eps > 0$,
\begin{enumerate}[1)]
\item Lossy communication:
\begin{equation}
\Pbb\Big[d_{\sf B}(X^n,Y^n) \leq D_{\sf B}+\eps\Big] \xrightarrow{n\to\infty} 1.
\end{equation}
\item List secrecy:
\begin{equation}
\label{listsecrecylossy}
\min_{\substack{\Lcal(m):|\Lcal|\leq2^{nR_{\sf L}}}} \Pbb\Big[\min_{z^n \in \Lcal(M)} d_{\sf E}(X^n,z^n) \geq D_{\sf E}-\eps\Big] \xrightarrow{n\to\infty} 1.
\end{equation}
\end{enumerate}
Note that the decoder of a $(n,R,R_0)$ code for lossy communication is a (possibly stochastic) decoder $P_{Y^n | MK}$.
\end{defn}

As with the lossless version of problem, we can reformulate Definition~\ref{listdefnlossy} in terms of a rate-limited henchman. The henchman formulation for the lossy communication setting is exactly described by Figure~\ref{fig:henchman} (with $\widehat{X}^n$ replaced by $Y^n$).

The optimal tradeoff between the various rates and distortions is the following.
\begin{thm}
\label{mainresultlossy}
Given a source distribution $P_X$ and distortion functions $d_{\sf B}(x,y)$ and $d_{\sf E}(x,z)$, the closure of achievable tuples $(R,R_0,R_{\sf L},D_{\sf B},D_{\sf E})$ is the set of tuples satisfying
 \begin{equation}
 %\left\{
 \begin{IEEEeqnarraybox}[][c]{rCl}
 R &\geq& I(X;Y)\\
 D_{\sf B} &\geq& \Ebb\, d_{\sf B}(X,Y)\\
 D_{\sf E} &\leq& \begin{cases}
D(R_{\sf L}) & \text{if } R_{\sf L} < R_0\\
\min\{D(R_{\sf L}), D(R_{\sf L} - R_0,P_{XY})\} & \text{if }R_{\sf L} \geq R_0
\end{cases}
 \end{IEEEeqnarraybox}
%\right\}.
\end{equation}
for some $P_{XY}=P_{X}P_{Y|X}$, where $D(\cdot,P_{XY})$ is the point-to-point distortion-rate function with side information channel $P_{Y|X}$ to the encoder and decoder:
\begin{equation}
D(R,P_{XY}) \triangleq \min_{P_{Z|XY}:R \geq I(X;Z|Y)} \Ebb\,d_{\sf E}(X,Z).
\end{equation}
\end{thm}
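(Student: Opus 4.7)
The plan is to split the proof into a converse and an achievability part, each building on the lossless machinery of Section~\ref{sec:losslessachievability} but with new wrinkles due to the non-trivial side-information channel $P_{Y|X}$ at Bob.

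For the converse, the bounds $R\geq I(X;Y)$ and $D_{\sf B}\geq\Ebb\, d_{\sf B}(X,Y)$ will follow from the standard lossy source-coding converse for Bob, after introducing a time-sharing random variable $J$ and defining $(X,Y)=(X_J,Y_J)$. The eavesdropper-side bound $D_{\sf E}\leq D(R_{\sf L})$ is witnessed by the attack in which the henchman ignores $M$ and sends a point-to-point rate-distortion description of $X^n$ at rate $R_{\sf L}$. The second bound $D_{\sf E}\leq D(R_{\sf L}-R_0,P_{XY})$, which applies only when $R_{\sf L}\geq R_0$, is witnessed by a ``key-revelation'' attack: the henchman spends $R_0$ bits transmitting $K$ exactly, after which the eavesdropper reconstructs (a sufficient statistic for) $Y^n$ by simulating Bob's decoder, and the remaining $R_{\sf L}-R_0$ bits carry a conditional rate-distortion code for $X^n$ with $Y^n$ as side information. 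The delicate point that separates this converse from the lossless case is that the \emph{same} $P_{XY}$ must simultaneously satisfy the rate-distortion constraint for Bob \emph{and} the conditional rate-distortion bound for the eavesdropper; I would handle this by letting $P_{XY}$ be the time-shared single-letter joint and carefully verifying that the henchman's conditional code achieves $D(R_{\sf L}-R_0,P_{XY})$ with respect to precisely that $P_{XY}$.

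For achievability, I would mirror the likelihood-encoder construction of Section~\ref{sec:losslessachievability}, now using a random codebook $\Ccal=\{Y^n(m,k)\}$ of i.i.d.\ $P_Y$ draws (for the marginal $P_Y$ of the chosen $P_{XY}$) and selecting $M$ with probability proportional to $P_{X|Y}^n(X^n\mid Y^n(m,k))$. Bob's decoder outputs $Y^n=Y^n(M,K)$. Soft covering, applied as in Lemma~\ref{distrclose}, shows that the induced $P_{X^nY^nMK}$ is close in total variation to an idealized $Q$ under which $(M,K)$ is uniform, $Y^n=Y^n(M,K)$, and $X^n\mid Y^n\sim P_{X|Y}^n$. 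This single closeness statement simultaneously delivers Bob's distortion guarantee (since under $Q$ the per-letter pair $(X_i,Y_i)\sim P_{XY}$) and licenses replacing $P_{X^n|M=m}$ by the tractable $Q_{X^n|M=m}$ when analyzing the eavesdropper.

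Under $Q_{X^n|M=m}$ the henchman--eavesdropper task reduces to a generalized codebook-compression subproblem: a random codebook of $2^{nR_0}$ $Y^n$-sequences is drawn, an index $J$ is chosen uniformly, $X^n$ is produced by passing $Y^n(J)$ through the memoryless channel $P_{X|Y}$, and the compressor is allowed $nR_{\sf L}$ bits. I expect this subproblem, and specifically the right generalization of Theorem~\ref{cbthm}, to be the main obstacle of the entire proof. I would prove that no code can drive the probability of $d_{\sf E}(X^n,Z^n)<\min\{D(R_{\sf L}),\,D(R_{\sf L}-R_0,P_{XY})\}$ above a sub-exponential threshold, except on a vanishing fraction of codebooks. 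The argument follows the Theorem~\ref{cbthm} skeleton: reduce the code to a reconstruction codebook, union-bound over $z^n\in\Zcal^n$, and hit the resulting i.i.d.\ indicators with the Chernoff bound of Lemma~\ref{chernoff} to obtain doubly-exponential concentration. The new ingredient is a strengthened version of Lemma~\ref{typebound} that bounds $\Pbb[d_{\sf E}(X^n,z^n)\leq D,\,X^n\in\Tcal_\delta^n(X)]$ for the \emph{mixed} source by splitting on whether $z^n$ is merely jointly typical with $X^n$ (contributing the $D(R_{\sf L})$ branch via the usual method-of-types bound $2^{-n R(D)}$) or jointly typical with both $X^n$ and some codeword $Y^n(j)$ in the sub-codebook (contributing the $D(R_{\sf L}-R_0,P_{XY})$ branch via a conditional-type count). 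Carefully carrying out this case split and recombining via the minimum is the principal technical hurdle; once the lemma is in place, the doubly-exponential concentration and the likelihood-encoder reduction conclude the proof as in the lossless case.
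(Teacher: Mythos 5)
Your achievability outline starts on the right track: the likelihood encoder with a random $Y^n$-codebook, soft covering to replace $P_{X^nMK}$ by the idealized $Q_{X^nMK}$, and the reduction to lossy compression of a noisy codeword are all exactly what the paper does. But the way you propose to handle the subproblem is not quite right, and the converse as outlined has a genuine gap.

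On the achievability subproblem: once you condition on $Y^n(j)$, the summands in the union bound are \emph{no longer} indicator random variables. Instead one works with $\zeta_{j,z^n}=\Pbb\big[d(X^n,z^n)\leq D,\,(X^n,Y^n(j))\in\Tcal_\delta^n\mid Y^n(j)\big]$, which are i.i.d.\ over $j$ but take values in an interval, not $\{0,1\}$. Your plan to ``hit the resulting i.i.d.\ indicators with the Chernoff bound of Lemma~\ref{chernoff}'' therefore cannot be applied verbatim, and your proposed case split on whether $z^n$ is typical with $X^n$ alone versus with some $Y^n(j)$ does not match the mechanism that produces the two branches of the minimum. The paper instead establishes \emph{two} bounds on $\zeta_{j,z^n}$: a pointwise (support) bound $\zeta_{j,z^n}\leq 2^{-n(R_Y(D)-o(1))}$ (Lemma~\ref{typebound2}) and a mean bound $\Ebb_{\Ccal_{\sf y}}\zeta_{j,z^n}\leq 2^{-n(R(D)-o(1))}$ (reusing Lemma~\ref{typebound}), and then invokes a Chernoff bound generalized to $[0,a]$-valued random variables (Corollary~\ref{chernoffcor2}). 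The two exponents $\alpha=R(D)-R-o(1)$ and $\beta=R_{\sf C}+R_Y(D)-R-o(1)$ in the resulting doubly-exponential bound come respectively from the mean bound and the support bound, which is where the condition $R<\min\{R(D),\,R_Y(D)+R_{\sf C}\}$ (and hence the minimum in the theorem) actually arises. Without the generalized Chernoff bound, the step where you need both branches to be positive simply does not go through.

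On the converse: your plan to introduce a time-sharing random variable and then ``carefully verify that the henchman's conditional code achieves $D(R_{\sf L}-R_0,P_{XY})$ with respect to precisely that $P_{XY}$'' is exactly the naive route the paper points out does \emph{not} work. The difficulty is that for an arbitrary (possibly suboptimal) Bob code you cannot even claim that $(X^n,Y^n)$ is jointly typical with respect to any single-letter $P_{XY}$, so the standard achievability of conditional rate-distortion coding is not available. The time-shared single-letter joint only controls expectations of per-letter quantities; it does not give the sequence-level joint-type control the henchman needs. The paper circumvents this with a type-based attack: it invokes Weissman's result (Lemma~\ref{codetypes}) that $I(T_{X^nY^n})\leq R$ a.s., intersects this with the achievability constraints to pin the empirical joint type into a set $\Acal_n$, and then has the henchman first send the joint type $T_{x^ny^n}$ (negligible rate) and then an index into a type-covering codebook (Lemma~\ref{typecovering}) that achieves $D(R_{\sf L}-R_0+\eps,T_{x^ny^n})+\eps$ for every sequence of that type. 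The resulting bound is then optimized over $\Acal_n$ and continuity of $D(\cdot,\cdot)$ in total variation closes the gap. This machinery is not a cosmetic variation of time-sharing; without it, the ``key-revelation plus conditional rate-distortion'' attack is not justified against an arbitrary Bob code.
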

%*****NOTE*****: We could also write the last inequality as simply
%\begin{equation}
%D_{\sf E} \leq \min\{D(R_{\sf L}), D_Y(R_{\sf L} - R_0)\}
%\end{equation}
%by defining $D_Y(r) = \infty$ for $r < 0$, but this isn't very proper because one would naturally define $D_Y(r) = D_Y(0)$ for $r < 0$, which doesn't work. Could also write the last inequality as
%\begin{equation}
%D_{\sf E} \leq
%\begin{cases}
%D(R_{\sf L}), & R_0 > R_{\sf L}\\
%\min\{D(R_{\sf L}), D_Y(R_{\sf L} - R_0)\}, & R_0 \leq R_{\sf L}
%\end{cases}
%\end{equation}
%but this might not look great. Also, clean up $d_{\sf E}$ stuff (inside definition of $D(R,P_{Y|X})$ and elsewhere). End of note.
 
When $R_{\sf L} < R_0$, the eavesdropper's distortion is at least $D(R_{\sf L})$, just as it was when we considered lossless communication. This should not be surprising in light of the previous section, since less information is being revealed to the eavesdropper (the communication rate between Nodes A and B is lower). As before, the henchman can simply use a point-to-point rate-distortion code to achieve $D(R_{\sf L})$. 

The more interesting regime is when $R_{\sf L} \geq R_0$, i.e., the list rate (equivalently, the henchman's rate) is greater or equal to the rate of secret key. In this case, Theorem~\ref{mainresultlossy} says that a communication scheme can be designed such that the eavesdropper's distortion cannot be less than
\begin{equation}
\min\{D(R_{\sf L}), D_Y(R_{\sf L} - R_0)\}.
\end{equation}
To see why these are the relevant distortions, consider the following. As we just mentioned, the henchman and the eavesdropper can always ignore the message $M$ and use a point-to-point code to achieve $D(R_{\sf L})$. Alternatively, when $R_{\sf L} \geq R_0$, the henchman can first use part of the rate $R_{\sf L}$ to communication the secret key to the eavesdropper. Then, roughly speaking, the henchman and eavesdropper effectively share side information $Y^n$ (since they both know $M$ and $K$ perfectly and can mimic the decoder), and can use the remaining rate $R_{\sf L}-R_0$ to achieve distortion $D(R_{\sf L} - R_0,P_{XY})$. Thus, one implication of Theorem~\ref{mainresultlossy} is that the henchman benefits from sending information about the secret key only if he describes it entirely; there is no benefit to communicating just part of the key to the eavesdropper. %These arguments are formalized in the converse to Theorem~\ref{mainresultlossy}, which we turn to now.

\section{Converse (lossy communication)}
\label{sec:lossyconverse}
We now present the converse proof for Theorem~\ref{mainresultlossy}. In the regime $R_0 > R_{\sf L}$, the converse is the same as when we required lossless communication. Nodes A and B (the legitimate parties) cannot force distortion greater than $D(R)$ with high probability because the henchman and the eavesdropper can always ignore the public message $M$ and simply use a good rate-distortion code to achieve distortion $D(R)$ with high probability. Note that this converse is ``strong" in the sense that the probability of eavesdropper distortion being greater than $D(R)$ is not just bounded away from unity, it is actually vanishing. To be explicit, observe that if  $R_{\sf L} < R_0$ and $D_E > D(R)$, then the expression in \eqref{listsecrecylossy} vanishes for all $\eps < D_E - D(R)$. This follows from the achievability portion of point-to-point rate-distortion theory.

When $R_{\sf L} \geq R_0$, the henchman's rate is high enough that he can communicate the secret key to the eavesdropper and still have leftover rate $R_{\sf L} - R_0$. Since the henchman and the eavesdropper both know $M$ and $K$, they can mimic the decoder of Node B and produce side information $Y^n$. Notice that we have made two assumptions: the henchman knows the secret key, and the receiver uses a deterministic decoder. However, if the henchman were not able to determine the secret key exactly, then multiple keys would correspond to the same source sequence, which means that the decoder would effectively be stochastic. Thus, we are making just one assumption: that the decoder is deterministic. This assumption is valid because a stochastic decoder cannot be used to increase the eavesdropper's distortion. Indeed, if we consider the list formulation of the problem, we see that eavesdropper's performance is completely determined by $X^n$ and $M$ alone; the output of Node B does not play a role.\footnote{Note that this would not be the case if we were considering distortion functions of the form $d_{\sf E}(x,y,z)$ instead of $d_{\sf E}(x,z)$.}

So far, we have that the henchman and the eavesdropper share side information $Y^n$ equal to the receiver's reconstruction. Ideally, we would like to claim that $(X^n,Y^n)$ are jointly i.i.d. according to some distribution $P_{XY}$ and use the achievability portion of rate-distortion theory with side information at the encoder and decoder. Unfortunately, we cannot even claim that with high probability $(X^n,Y^n)$ are jointly typical according to some $P_{XY}$ because that is only guaranteed when Nodes A and B are using a nearly optimal rate-distortion code (i.e., one that operates near the rate-distortion tradeoff boundary). Instead, we rely on a different property of $(X^n,Y^n)$ that will be given shortly in Lemma~\ref{codetypes}.

We will describe the henchman and eavesdropper's scheme in terms of the joint type of $(X^n,Y^n)$; to do this, we require the following straightforward extension of the type-covering lemma \cite[Lemma 9.1]{Csiszar2011} that accounts for side information (proof omitted). Regarding notation, $\Tcal_{X}^n$ denotes the set of sequences whose types coincide with a given distribution $P_X$, and $\Tcal(\Xcal^n)$ denotes the set of all joint types on sequences in $\Xcal^n$.

\begin{lemma}
\label{typecovering}
Let $\tau > 0$ and $r\geq0$. Fix a joint type $P_{XY} \in \Tcal(\Xcal^n\times\Ycal^n)$, and let $y^n \in \Tcal_Y^n$. For $n \geq n_0(\tau)$, there exists a codebook $\Ccal(y^n, P_{XY}) \subseteq \Zcal^n$ such that
\begin{enumerate}[1)]
\item
\begin{equation}
\frac1n \log |\Ccal(y^n, P_{XY})| \leq r.
\end{equation}
\item For all $x^n$ such that $(x^n,y^n)\in \Tcal_{XY}^n$,
\begin{equation}
\min_{z^n\in\Ccal(y^n,P_{XY})} d(x^n,z^n) \leq D(r, P_{XY}) + \tau
\end{equation}
\end{enumerate}
\end{lemma}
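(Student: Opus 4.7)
The plan is a random coding argument via the method of types, conditioned on the fixed side information $y^n$. I would generate codewords whose conditional type given $y^n$ matches the $Z$-marginal of a near-optimal test channel, then use type counting to show every $x^n$ of the target joint type is covered.

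First I would pick a slightly suboptimal test channel. By continuity (in fact convexity) of $D(\cdot, P_{XY})$ in the rate, there exist $\eta>0$ and a conditional distribution $P^*_{Z|XY}$ with
\[
I_{P_{XY}P^*_{Z|XY}}(X;Z\mid Y) \leq r - \eta, \qquad \Ebb_{P_{XY}P^*_{Z|XY}}\, d(X,Z) \leq D(r, P_{XY}) + \tau/2.
\]
For $n$ large I would replace $P^*_{Z|XY}$ by a nearby conditional distribution $\tilde P_{Z|XY}$ such that $P_{XYZ} \triangleq P_{XY}\tilde P_{Z|XY}$ is a valid joint type on $\Xcal\times\Ycal\times\Zcal$ for blocklength $n$; this perturbation only changes the bounds by $o(1)$.

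Next I would generate $M = \lfloor 2^{nr}\rfloor$ codewords $Z^n(1),\dots,Z^n(M)$ independently, each drawn uniformly on the conditional type class $\{z^n \in \Zcal^n : (y^n, z^n) \text{ has joint type } P_Y\tilde P_{Z|Y}\}$, where $\tilde P_{Z|Y}$ is the $Z\mid Y$ marginal of $P_{XYZ}$. Standard method-of-types counting gives: for every $x^n$ with $(x^n, y^n) \in \Tcal^n_{XY}$, a single random $Z^n(j)$ produces full joint type $P_{XYZ}$ on the triple $(x^n, y^n, Z^n(j))$ with probability at least $2^{-n(I(X;Z\mid Y) + o(1))} \geq 2^{-n(r - \eta + o(1))}$. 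Hence the probability that no codeword does so is at most $\bigl(1 - 2^{-n(r - \eta + o(1))}\bigr)^M \leq \exp\bigl(-2^{n(\eta - o(1))}\bigr)$, doubly exponentially small in $n$.

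A union bound over the $\leq 2^{nH(X|Y)}$ sequences $x^n$ with $(x^n, y^n) \in \Tcal^n_{XY}$ keeps the total failure probability doubly exponentially small, so a random codebook covers every eligible $x^n$ with probability tending to one, and in particular a deterministic codebook with this property exists. For any such covered $x^n$ and covering $z^n$, the empirical per-letter distortion equals $\Ebb_{P_{XYZ}}\, d(X,Z) \leq D(r, P_{XY}) + \tau/2 + o(1) \leq D(r, P_{XY}) + \tau$ for $n \geq n_0(\tau)$; the rate bound $\tfrac{1}{n}\log M \leq r$ is built into the construction. The main obstacle is the method-of-types bookkeeping --- approximating the possibly irrational $P^*_{Z|XY}$ by a valid conditional type whose joint-type class is nonempty, and verifying that the polynomial correction factors in the type-counting lower bound are comfortably absorbed into the slack $\eta$ --- but this is no more delicate than in the unconditional type covering lemma, with $y^n$ playing the role of frozen side information throughout.
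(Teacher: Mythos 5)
The paper omits the proof of this lemma, deferring to it as a ``straightforward extension'' of the type covering lemma (Csisz\'ar--K\"orner, Lemma~9.1) that incorporates side information, so there is no in-text proof to compare against. Your random-coding argument is a valid and standard route: drawing codewords uniformly from the $V$-shell $\Tcal^n_{\tilde P_{Z|Y}}(y^n)$ and using the type-class estimate $\Pbb\big[(x^n,y^n,Z^n)\in\Tcal^n_{P_{XYZ}}\big]\geq 2^{-n(I(X;Z\mid Y)+o(1))}$ gives a per-sequence covering-failure probability that decays doubly exponentially once $M=\lfloor 2^{nr}\rfloor$ exceeds $2^{n(r-\eta+o(1))}$, and the union bound over the at most $2^{nH(X|Y)}$ sequences $x^n$ with $(x^n,y^n)\in\Tcal^n_{XY}$ preserves this. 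Covering at an \emph{exact} joint type also makes the distortion bookkeeping clean, since the per-letter distortion then equals $\Ebb_{P_{XYZ}} d(X,Z)$ exactly rather than up to typicality slack.

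One edge case to patch: when $r=0$, which the lemma allows, you cannot choose $\eta>0$ with $I(X;Z\mid Y)\leq r-\eta$. But there the minimizing test channel has $I(X;Z\mid Y)=0$, so $Z$ is a (deterministic) function $g$ of $Y$, and the singleton codebook $\{z^n\}$ with $z_i=g(y_i)$ attains $d(x^n,z^n)=\Ebb_{P_{XY}}\, d(X,g(Y))=D(0,P_{XY})$ exactly for every $x^n$ in the conditional type class, so the lemma holds trivially. Otherwise the argument is sound; just be explicit that after perturbing $P^*_{Z|XY}$ to the valid conditional type $\tilde P_{Z|XY}$, the conditional mutual information under $P_{XY}\tilde P_{Z|XY}$ remains strictly below $r$ --- that is where the slack $\eta$ is actually spent, and it holds for $n$ large since the perturbation is $o(1)$ in total variation.
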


We also require the following lemma from \cite{Weissman2005}.
\begin{lemma}[{\cite[Theorem 7]{Weissman2005}}]
\label{codetypes}
Consider any sequence of rate-distortion codes with rate $\leq R$. Then
\begin{equation}
\limsup_{n\to\infty} I(T_{X^nY^n}) \leq R \quad \text{a.s.},
\end{equation}
where $T_{X^nY^n}$ denotes the type of $(X^n,Y^n)$ and $I(\cdot)$ is the mutual information.
\end{lemma}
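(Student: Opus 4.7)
The plan is a method-of-types argument followed by Borel--Cantelli. The intuition is that a rate-$R$ code confines $Y^n$ to a list of at most $2^{nR}$ codewords, so for any fixed joint type $Q_{XY}$ the set of source sequences $x^n$ that can produce empirical joint distribution $Q_{XY}$ has cardinality at most $2^{n(R+H(Q_{X|Y}))}$, while each such $x^n$ has $P_X^n$-mass $2^{-n(H(Q_X)+D(Q_X\|P_X))}$. The resulting probability bound has exponent $R - I(Q_{XY}) - D(Q_X\|P_X) \leq R - I(Q_{XY})$, making joint types with $I(Q_{XY}) > R$ exponentially rare.

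Concretely, I assume the encoder and decoder are deterministic; stochastic variants (including any shared key) are handled by conditioning on the internal randomization, since the bound below is uniform over deterministic codes of rate $\leq R$. Fix such a code with reconstruction codebook $\{y^n(1),\ldots,y^n(2^{nR})\}$ and a joint type $Q_{XY}$. Decomposing over the codeword index $i$, using $|\Tcal_{Q_{X|Y}}^n(y^n(i))| \leq 2^{nH(Q_{X|Y})}$ for each $y^n(i)$ of marginal type $Q_Y$ (the conditional type class is empty otherwise), together with $P_X^n(x^n)=2^{-n(H(Q_X)+D(Q_X\|P_X))}$ for $x^n$ of type $Q_X$, yields
\begin{equation}
\Pbb[T_{X^nY^n}=Q_{XY}] \;\leq\; 2^{nR}\cdot 2^{nH(Q_{X|Y})}\cdot 2^{-n(H(Q_X)+D(Q_X\|P_X))} \;\leq\; 2^{-n(I(Q_{XY})-R)}.
\end{equation}

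Fix $\eps>0$ and union-bound over the at most $(n+1)^{|\Xcal||\Ycal|}$ joint types with $I(Q_{XY})\geq R+\eps$:
\begin{equation}
\Pbb[I(T_{X^nY^n})\geq R+\eps]\;\leq\;(n+1)^{|\Xcal||\Ycal|}\cdot 2^{-n\eps},
\end{equation}
which is summable in $n$. Realizing all the $X^n$ as the first $n$ coordinates of a single i.i.d.\ process $(X_1,X_2,\ldots)$, Borel--Cantelli then gives $I(T_{X^nY^n})<R+\eps$ for all sufficiently large $n$ almost surely, and taking $\eps$ through a countable sequence $\downarrow 0$ yields $\limsup_n I(T_{X^nY^n})\leq R$ a.s.

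The main obstacle is essentially bookkeeping rather than depth: ensuring the ``a.s.'' statement is well-defined across varying blocklengths (hence the move to a single probability space carrying the entire source process) and verifying that randomization in the code is neutralized by the uniform-in-code nature of the per-type bound. The core method-of-types estimate, combined with the summability of the union bound, then does all the work.
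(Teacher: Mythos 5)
Your proof is correct. Note that the paper does not actually supply a proof of this lemma—it is cited directly as Theorem~7 of Weissman, Merhav, and Baruch (2005)—so there is no paper proof to compare against. Your argument is the natural one: the per-type bound $\Pbb[T_{X^nY^n}=Q_{XY}] \leq 2^{nR}\cdot 2^{nH(Q_{X|Y})}\cdot 2^{-n(H(Q_X)+D(Q_X\|P_X))} = 2^{-n(I(Q_{XY})-R+D(Q_X\|P_X))}$ is exact for deterministic codes (sum over the $\leq 2^{nR}$ reproduction words of marginal type $Q_Y$, size the V-shell of each, weight by $P_X^n$), and after dropping the nonnegative $D(Q_X\|P_X)$ term the union bound over types and Borel--Cantelli finish it. Your handling of the two minor technicalities is also right: conditioning on the decoder's internal randomization (and the key) reduces to a deterministic rate-$R$ code, and the bound is uniform over such codes, so averaging preserves the exponential tail; and embedding the $X^n$'s as prefixes of one i.i.d.\ process (with code randomness adjoined independently) gives the common probability space needed for the a.s.\ limsup. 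This matches the standard argument one would find in the cited reference, so your proposal is in the first category: correct, and essentially the expected proof.
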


Now we can begin the converse proof for the regime $R_{\sf L} \geq R_0$. Consider an achievable tuple $(R,R_0,R_{\sf L},D_{\sf B},D_{\sf E})$. By the same argument that was used in the regime $R_{\sf L} < R_0$, we must have $D_E \leq D(R_{\sf L})$ because the henchman and the eavesdropper can always ignore the public message and use a good rate-distortion code to describe $X^n$. 

Let $\eps \in (0,1/2)$. Define a set $\Acal_n$ of joint distributions on $\Xcal\times\Ycal$ by
\begin{equation}
\Acal_n \triangleq \left\{
 \begin{IEEEeqnarraybox}[][c]{ll}
 Q_{XY}:\:\: & I_Q(X;Y) \leq R + \eps\\
 & \Ebb_Q\, d_{\sf B}(X,Y) \leq D_{\sf B} + \eps\\
 & \lVert Q_X - P_X \rVert_{\sf TV} \leq \eps
 \end{IEEEeqnarraybox}
\right\}.
\end{equation}
We first show that
\begin{equation}
\label{typepropertieslimit}
\lim_{n\to\infty} \Pbb[T_{X^nY^n} \in A_n] = 1,
\end{equation}
where $T_{X^nY^n}$ denotes the type of $(X^n,Y^n)$. This can be proved by combining the following three facts:
\begin{enumerate}[1)]
\item From Lemma~\ref{codetypes}, we have
\begin{equation}
\lim_{n\to\infty} \Pbb [I(T_{X^n,Y^n}) \leq R + \eps] = 1.
\end{equation}
\item From the definition of achievability and the equality $d_{\sf B}(x^n,y^n) = \Ebb_{T_{x^ny^n}}\,d_{\sf B}(x,y)$, we have
\begin{equation}
\lim_{n\to\infty} \Pbb [\Ebb_{T_{X^nY^n}}\,d_{\sf B}(x,y) \leq D + \eps] = 1.
\end{equation}
\item From the weak law of large numbers, we have
\begin{equation}
\lim_{n\to\infty} \Pbb [\lVert T_{X^n} - P_X \rVert_{\sf TV} \leq \eps] = 1.
\end{equation}
\end{enumerate}

With \eqref{typepropertieslimit} in hand, choose $n$ large enough so that
\begin{enumerate}[1)]
%\footnote{Although we could ensure that the probability is greater than $1-\eps$, the weaker guarantee is enough for our purposes.}
\item The eavesdropper cannot reconstruct with low distortion (this is an assumption of achievability):
\begin{equation}
\label{contradict}
\max_{\substack{m_{\sf H}(x^n,m),z^n(m,m_{\sf H}):\\|\Mcal_{\sf H}|\leq2^{nR_{\sf L}}}} \Pbb\Big[d_{\sf E}(X^n,z^n(M,M_{\sf H})) < D_{\sf E}-\eps\Big] < \eps.
\end{equation}
\item 
\begin{equation}
\label{typeeps}
\Pbb[T_{X^nY^n} \in A_n]\geq\eps
\end{equation}
\item Lemma~\ref{typecovering} is satisfied with $\tau = \eps$ and $r = R_{\sf L} - R_0$.
\item The number of bits needed to express the joint type is negligible:
\begin{equation}
\label{jointtyperate}
\frac1n |\Xcal| |\Ycal| \log (n+1) \leq \eps.
\end{equation}
\end{enumerate}
To compress $x^n$ using side information ${y}^n$, the henchman first describes the joint type of $(x^n,{y}^n)$, then transmits the index of $x^n$ in the codebook $\Ccal(y^n, T_{x^n,{y}^n})$ that is guaranteed by Lemma~\ref{typecovering}. The description of the joint type only uses additional rate $\eps$ because the size of $\Tcal(\Xcal^n\times\Ycal^n)$ is bounded by $(n+1)^{|\Xcal||\Ycal|}$ and \eqref{jointtyperate} is satisfied. Therefore, for a given source sequence $x^n$ and side information sequence ${y}^n$, the henchman is able to send a message at rate $(R_{\sf L} - R_0)+\eps$ such that the eavesdropper can produce $z^n$ with distortion
\begin{equation}
\label{typedistortion}
d_{\sf E}(x^n,z^n) \leq D(R_{\sf L} - R_0+\eps, T_{x^n{y}^n}) + \eps.
\end{equation}
Now define
\begin{equation}
Q^*_{XY} \triangleq \argmax_{Q\in\Acal_n} D(R_{\sf L}-R_0+\eps,Q).
\end{equation}
From \eqref{typeeps}, we see that with probability at least $\eps$, the henchman and the eavesdropper can achieve distortion
\begin{IEEEeqnarray}{rCl}
d_{\sf E}(X^n,Z^n) &\leq& D(R_{\sf L} - R_0+\eps, T_{X^nY^n}) + \eps\\
&\leq& D(R_{\sf L} - R_0+\eps, Q^*_{XY}) + \eps
\end{IEEEeqnarray}
Therefore, in view of \eqref{contradict}, we can bound $D_{\sf E}$:
\begin{IEEEeqnarray}{rCl}
D_{\sf E} &\stackrel{(a)}{\leq}& D(R_{\sf L} - R_0+\eps, Q^*_{XY})+2\eps \\
&\stackrel{(b)}{\leq}& D(R_{\sf L} - R_0+\eps, P_{X}Q^*_{Y|X})+2\eps+o(\eps).
\end{IEEEeqnarray}
Step (a) follows from \eqref{contradict}. Step (b) is due to $\lVert Q^*_X - P_X\rVert_{\sf TV}<\eps$ and the fact that the rate-distortion function is continuous in $P_X$ with respect to total variation distance (e.g., see \cite{Palaiyanur2008}). Because $Q^*_{XY}\in\Acal_n$, we can also bound $R$ and $D_{\sf B}$. First, we have
\begin{IEEEeqnarray}{rCl}
R &\geq& I(Q^*_{XY}) - \eps\\
&\stackrel{(a)}{\geq}& I(P_XQ^*_{Y|X}) - \eps - o(\eps),
\end{IEEEeqnarray}
where (a) is due to the continuity of mutual information with respect to total variation distance. Next, we have
\begin{IEEEeqnarray}{rCl}
D_{\sf B} &\geq& \Ebb_{Q^*_{XY}}\,d_{\sf B}(X,Y) -\eps \\
&\stackrel{(a)}{\geq}& \Ebb_{P_XQ^*_{Y|X}}\,d_{\sf B}(X,Y) -\eps - o(\eps),
\end{IEEEeqnarray}
where (a) uses Property~\ref{tvproperties}c of total variation.

Assimilating the bounds that we have established, we can conclude that any achievable tuple $(R,R_0,R_{\sf L},D_{\sf B},D_{\sf E})$ lies in the region
\begin{equation}
\Scal_{\eps}\triangleq \bigcup_{P_{Y|X}}\left\{
 \begin{IEEEeqnarraybox}[][c]{rCl}
 (R,R_0,R_{\sf L},D_{\sf B},D_{\sf E}):\:R &\geq& I(X;Y)-o(\eps)\\
 D_{\sf B} &\geq& \Ebb\, d_{\sf B}(X,Y) - o(\eps)\\
 D_{\sf E} &\leq& \min\{D(R_{\sf L}), D(R_{\sf L} - R_0+\eps, P_{Y|X}) + o(\eps) \}
 \end{IEEEeqnarraybox}
\right\}.
\end{equation}
Since this holds for all $\eps >0$, we have
\begin{equation}
\label{intersectregions}
(R,R_0,R_{\sf L},D_{\sf B},D_{\sf E}) \in \bigcap_{\eps > 0} \Scal_{\eps}.
\end{equation}
The region in \eqref{intersectregions} is equal to the region in Theorem~\ref{mainresultlossy} (subject to $R_{\sf L} \geq R_0$), completing the converse proof.

\section{Achievability (lossy communication)}
\label{sec:lossyachievability}
In this section, we prove the achievability portion of Theorem~\ref{mainresultlossy}, the lossy communication counterpart to Theorem~\ref{mainresult}. The skeleton of the proof is similar to the one presented in Section~\ref{sec:losslessachievability}, but we will need some enhanced versions of some of the components. 

As in the lossless setting, we can view the henchman and the eavesdropper as the sender and receiver in a rate-limited system with side information $M$ (i.e., the public message) available to both parties. The correlation between the side information and the source sequence $X^n$ will govern the performance; therefore, we are interested in $P_{X^n|M=m}$ since this is the effective source distribution after accounting for common side information. As before, the encoder at Node A determines $P_{X^n|M=m}$ entirely. In Section~\ref{sec:losslessachievability}, we were motivated by the effect of random binning (which we later replaced with a likelihood encoder for ease of analysis). However, instead of simply randomly binning $X^n$ and using $K$ to hide the location within the bin, we now want to first perform lossy compression using a codebook of sequences from $\Ycal^n$, followed by a random binning of the codebook. Roughly speaking, this process results in a distribution $P_{X^n|M=m}$ that corresponds to selecting a $y^n$ sequence uniformly from a random codebook of size $2^{nR_0}$, then passing that sequence through a memoryless channel $\prod P_{X|Y}$. The justification for this assertion will become clear when we use a likelihood encoder later on; for now, we study the subproblem that just surfaced: lossy compression of a \emph{noisy version} of a codeword drawn uniformly from a random codebook. 

\subsection{Lossy compression of a noisy version of a codeword drawn uniformly from a random codebook}

Consider a codebook $c_{\ysf} = \{y^n(1),\ldots,y^n(2^{nR_{\sf C}})\}$ consisting of $2^{nR_{\sf C}}$ sequences in $\Ycal^n$. Select a codeword uniformly at random from $c_{\ysf}$ and denote it by $y^n(J)$, where $J \sim \text{Unif}[2^{nR_{\sf C}}]$. Pass $y^n(J)$ through a memoryless channel $\prod P_{X|Y}$ to produce a sequence $X^n$. An encoder describes $X^n$ using a noiseless link of rate $R$, and a decoder estimates it with a reconstruction sequence $Z^n$ (incurring distortion $d(X^n,Z^n))$. Both the encoder and decoder know the codebook $c_{\ysf}$, and together they constitute a $(n,c_{\ysf},R)$ code. The setup is shown in Figure~\ref{fig:codebooklossy} for a random codebook $\Ccal_{\ysf}$.

The following theorem generalizes Theorem~\ref{cbthm} (to recover that theorem, set $Y=X$).
\begin{thm}
\label{cblossythm}
Fix $P_{XY}$, $R$, $R_{\sf C}$ and $D$. Let $\Ccal_{\ysf}$ be a random codebook of $2^{nR_{\sf C}}$ codewords, each drawn independently according to $\prod_{i=1}^n P_Y(y_i)$. Let $\tau_n$ be any sequence that converges to zero sub-exponentially fast (i.e., $\tau_n=2^{-o(n)}$). If
%\begin{equation}
%\label{regime1}
%R < R_{\Ysf} \text{ and } R < R(D)
%\end{equation}
%or
\begin{equation}
\label{regime2}
R < \min\{R(D), R_Y(D) + R_{\sf C}\},
\end{equation}
then with high probability it is impossible to achieve distortion $D$ in the sense that
\begin{equation}
\label{cbobjlossy}
\lim_{n\to\infty}\Pbb_{\Ccal_{\ysf}}\Big[ \max_{(n,\Ccal_{\ysf},R)\text{ codes}}  \Pbb[d(X^n,Z^n)\leq D] > \tau_n \Big] = 0.
\end{equation}
The function $R_Y(D)$ is the rate-distortion function with side information:
\begin{equation}
R_Y(D) = \min_{P_{Z|XY}: \Ebb\, d(X,Z) \leq D} I(X;Z|Y).
\end{equation}
\end{thm}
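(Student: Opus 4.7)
The plan is to imitate the proof of Theorem~\ref{cbthm}, with joint typicality of $(X^n, y^n(J))$ replacing marginal typicality of $X^n$ throughout. I would begin by splitting off the event $\Acal = \{(X^n, y^n(J)) \in \Tcal^n_\delta(P_{XY})\}$, which has probability tending to one since $(X^n, y^n(J)) \sim P_{XY}^n$ by construction. Next, I would view an $(n, \Ccal_{\ysf}, R)$ code as a reconstruction codebook $c_{\zsf} \subseteq \Zcal^n$ of size at most $2^{nR}$ paired with the minimum-distortion encoder and apply the union-bound/uniform-$J$ manipulations that gave \eqref{codeequiv}--\eqref{rvdenote} to obtain
\begin{equation*}
\max_{\text{codes}} \Pbb[d(X^n, Z^n) \le D, \Acal] \le 2^{-n(R_{\sf C} - R)} \max_{z^n \in \Zcal^n} \sum_{j=1}^{2^{nR_{\sf C}}} \xi_{j, z^n},
\end{equation*}
where $\xi_{j, z^n} = \mathbf{1}\{d(X^n(j), z^n) \le D,\ (X^n(j), y^n(j)) \in \Tcal^n_\delta(P_{XY})\}$ are i.i.d.\ Bernoulli random variables over the random codebook.

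A joint-typicality extension of Lemma~\ref{typebound}, proved by the method of types on $(X, Y, Z)$, yields $\Ebb \xi_{j, z^n} \le 2^{-n[R(D) - o(1)]}$: the dominant joint-type exponent is $\min I_Q(XY; Z)$ subject to $Q_{XY} = P_{XY}$ and $\Ebb_Q d(X, Z) \le D$, which equals $R(D)$ (attained by a $Z$ that depends only on $X$). Feeding this into Lemma~\ref{chernoff} with $m = 2^{nR_{\sf C}}$, $p \le 2^{-nR(D) + o(n)}$, and $k = \tau_n 2^{n(R_{\sf C} - R)}$ produces doubly exponential decay---dominating the $|\Zcal|^n$ union bound over $z^n$---whenever $R < R(D)$ and $R < R_{\sf C}$, which recovers Theorem~\ref{cbthm} as the special case $Y = X$ (for which $R_Y(D) = 0$).

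The hard part is the complementary regime $R_{\sf C} \le R < R_Y(D) + R_{\sf C}$, where the Chernoff threshold $k = \tau_n 2^{n(R_{\sf C} - R)}$ no longer grows and the argument above degenerates. I expect this case to be handled by a finer per-codeword conditioning: for a typical $y^n(j)$, a conditional method-of-types calculation gives $\Pbb[d(X^n(j), z^n) \le D \mid y^n(j)] \le 2^{-n[R_Y(D) - o(1)]}$, so each codeword in $\Ccal_{\ysf}$ admits effectively $2^{n R_Y(D)}$ useful reconstructions. Because the $y^n(j)$'s are i.i.d.\ $P_Y$, the useful reconstruction sets for distinct codewords are essentially disjoint under joint typicality, so a reconstruction codebook of size below $2^{n(R_Y(D) + R_{\sf C})}$ cannot cover a $(1 - \tau_n)$-fraction of indices $j$. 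Formalizing this disjointness---likely through a refined Chernoff or sphere-packing argument tracking the joint type $Q_{YZ}$ of matched pairs $(y^n(j), z^n)$---is the principal technical challenge, and combining it with the previous step will deliver the full condition $R < \min\{R(D), R_Y(D) + R_{\sf C}\}$.
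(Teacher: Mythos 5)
The crucial definitional misstep is that your $\xi_{j,z^n}$ are written as Bernoulli indicators of an event involving a phantom sequence $X^n(j)$, but in the lossy setting the source $X^n$ is a single draw obtained by channeling $Y^n(J)$, not a deterministic function of each codeword. Conditioning on the codebook $\Ccal_{\ysf}$ and averaging over $J$ gives
\begin{equation*}
\Pbb[d(X^n,z^n)\leq D,\,\Acal\,|\,\Ccal_{\ysf}]
= 2^{-nR_{\sf C}}\sum_{j=1}^{2^{nR_{\sf C}}} \Pbb[d(X^n,z^n)\leq D,\,\Acal\,|\,Y^n(j)],
\end{equation*}
so the correct per-index object is a \emph{conditional probability} $\zeta_{j,z^n}\in[0,1]$ (a function of $Y^n(j)$ alone), not a $\{0,1\}$-valued indicator. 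The paper works with these $\zeta_{j,z^n}$ precisely because that is what makes the next step go through.

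This matters because it is exactly where the regime $R_{\sf C}\leq R < R_Y(D)+R_{\sf C}$ gets rescued, and this is the gap you acknowledge but propose to fill the wrong way. The paper never invokes disjointness or sphere packing. Instead it observes two bounds on $\zeta_{j,z^n}$: a \emph{support} bound $\zeta_{j,z^n}\leq 2^{-n(R_Y(D)-o(1))}$ almost surely (Lemma~\ref{typebound2}, a conditional method-of-types estimate given $y^n$), and a \emph{mean} bound $\Ebb_{\Ccal_{\ysf}}\zeta_{j,z^n}\leq 2^{-n(R(D)-o(1))}$ (reducing to Lemma~\ref{typebound}). It then extends the Chernoff bound (Corollary~\ref{chernoffcor2}) from Bernoulli variables to i.i.d.\ $[0,a]$-valued variables, which changes the exponent from $k$ to $k/a$. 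Plugging in $a=2^{-n(R_Y(D)-o(1))}$ yields $\beta=R_{\sf C}+R_Y(D)-R-o(1)$ in the double exponent, which is strictly positive exactly when $R<R_Y(D)+R_{\sf C}$. Your sketch produces only $\beta=R_{\sf C}-R$ because you treated the summands as indicators, losing the factor of $1/a$; that is why your argument ``degenerates'' when $R\geq R_{\sf C}$. The fix is not a separate disjointness argument but the recognition that each summand is a conditional probability whose small support size directly tightens the concentration. Your mean exponent $R(D)$ (and the remark that $Y=X$ recovers Theorem~\ref{cbthm} with $R_Y(D)=0$) is correct and matches the paper.
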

Before diving into the proof, let us briefly justify why the regime in \eqref{regime2} is the one of interest. First, observe that whenever $R \geq R(D)$ is satisfied, distortion $D$ can be achieved by simply using a regular point-to-point rate distortion code to describe $X^n$. Second, whenever $R\geq R_Y(D) + R_{\sf C}$ holds, distortion $D$ can be achieved in roughly the following manner. The encoder first identifies a codeword in $\Ccal_{\ysf}$ that is jointly typical with $X^n$ (according to $P_{XY}$) and sends the index of the codeword using rate $R_{\sf C}$. The codeword is then treated as side information, which allows the encoder to describe $X^n$ using rate $R_Y(D)$. So we see that \eqref{regime2} is actually necessary for \eqref{cbobjlossy} to hold.

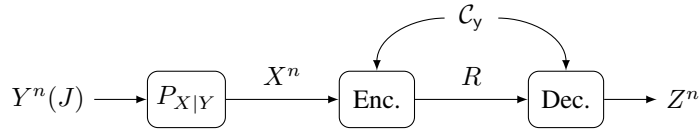
\begin{figure}
\begin{center}
\begin{tikzpicture}
 [node distance=1cm,minimum width=1cm,minimum height =.75 cm]
  \node[rectangle,minimum width=5mm] (codeword) {$Y^n(J)$};
  \node[node] (channel) [right =7mm of codeword] {$P_{X|Y}$};
  %\node[rectangle,minimum width=5mm] (source) [right =7mm of channel] {$X^n$};
  \node[node] (alice) [right =15mm of channel] {Enc.};
  \node[node] (bob) [right =1.5cm of alice] {Dec.};
  \node[coordinate] (dummy) at ($(alice.east)!0.5!(bob.west)$) {};
  \node[rectangle,minimum width=5mm] (xhat) [right =7mm of bob] {$Z^n$};
  \node[rectangle,minimum width=7mm] (key) [above =7mm of dummy] {$\Ccal_{\ysf}$};
  
  \draw [arw] (codeword) to (channel);
  %\draw [arw] (channel) to (source);
  \draw [arw] (channel) to node[minimum height=6mm,inner sep=0pt,midway,above]{$X^n$} (alice);
  \draw [arw] (alice) to node[minimum height=6mm,inner sep=0pt,midway,above]{$R$} (bob);
  \draw [arw] (bob) to (xhat);
  \draw [arw] (key) to [out=180,in=90] (alice);
  \draw [arw] (key) to [out=0,in=90] (bob);
 \end{tikzpicture}
 \caption{\small Lossy compression of a noisy version of a codeword drawn uniformly from a random codebook $\Ccal_{\ysf}=\{Y^n(1),\ldots,Y^n(2^{nR_{\sf C}})\}$. Both the encoder and decoder know the codebook $\Ccal_{\ysf}$. The encoder describes $X^n$, the output of a memoryless channel $\prod P_{X|Y}$ whose input is a randomly chosen codeword $Y^n(J)$, where $J\sim \text{Unif}[2^{nR_{\sf C}}]$.}
 \label{fig:codebooklossy}
 \end{center}
 %\vspace{-5mm}
 \end{figure}

\begin{proof}
We follow the basic rubric of Section~\ref{sec:losslessachievability}, making modifications where they are needed. 

Fixing $P_{XY}$, we first restrict $(X^n,Y^n(J))$ to be jointly typical by writing
\begin{equation}
\label{typical2}
\Pbb[d(X^n,Z^n)\leq D] \leq \Pbb[d(X^n,Z^n)\leq D, \Acal] + \Pbb[\Acal^c],
\end{equation}
where $\Acal$ denotes the event $\{(X^n,Y^n(J)) \in \Tcal_{\delta}^n(X,Y) \}$. Note that the second term in~\eqref{typical2} vanishes in the limit for any $\delta>0$ since $(X^n,Y^n(J))$ is i.i.d. according to $P_{XY}$.

Continuing exactly as in Section~\ref{sec:losslessachievability}, we have
\begin{IEEEeqnarray}{rCl}
\max_{(n,\Ccal_{\ysf},R)\text{ codes}}  \Pbb[d(X^n,Z^n)\leq D,\Acal]
&\leq& 2^{nR} \max_{z^n\in \Zcal^n} \Pbb[ d(X^n,z^n)\leq D,\Acal]\\
&\stackrel{}{=}& 2^{nR} \max_{z^n\in \Zcal^n} \Ebb_J\,\Pbb[ d(X^n,z^n) \leq D, \Acal | Y^n(J)]\\
&=& 2^{-n(R_{\sf C} - R)} \max_{z^n\in \Zcal^n} \sum_{j=1}^{2^{nR_{\sf C}}} \Pbb[ d(X^n,z^n) \leq D, \Acal | Y^n(j)]
\end{IEEEeqnarray}
Denote the terms in the sum by $\zeta_{j,z^n}$:
\begin{IEEEeqnarray}{rCl}
\label{zetaj}
\zeta_{j,z^n} &\triangleq& \Pbb[ d(X^n,z^n) \leq D, \Acal | Y^n(j)]\\
&=& \sum_{x^n\in\Xcal^n} \prod_{i=1}^n P_{X|Y}(x_i | Y_i(j)) \cdot \mathbf{1}\{d(x^n,z^n)\leq D, (x^n,Y^n(j))\in \Tcal_{\delta}\}
\end{IEEEeqnarray}
Continuing in the manner of Section~\ref{sec:losslessachievability} leads us to
\begin{equation}
\label{finalprob2}
\Pbb\Big[ \max_{(n,\Ccal_{\ysf},R)\text{ codes}}  \Pbb[d(X^n,Z^n)\leq D,\Acal] > \tau_n \Big]
\leq |\Zcal|^n \max_{z^n\in\Zcal^n} \Pbb\Big[\sum_{j=1}^{2^{nR_{\sf C}}} \zeta_{j,z^n} > \tau_n 2^{n(R_{\sf C}-R)}\Big],
\end{equation}

As with the $\xi_{j,z^n}$ defined in Section~\ref{sec:losslessachievability} (Eq. \eqref{xij}), the $\zeta_{j,z^n}$ are i.i.d. due to the nature of the random codebook; however, they are no longer Bernoulli random variables. The following lemma, a straightforward generalization of Lemma~\ref{typebound}, shows that $\zeta_{j,z^n}$ is bounded above by $2^{-n(R_Y(D)-o(1))}$ with probability one. The proof is omitted.

\begin{lemma}
\label{typebound2}
Fix $P_{XY}$ and $y^n\in\Ycal^n$. If $X^n$ is distributed according to $\prod_{i=1}^n P_{X|Y=y_i}$, then for any $z^n$,
\begin{equation}
\Pbb\big[d(X^n,z^n)\leq D, (X^n,y^n) \in \Tcal_{\delta}^n \,\big|\, Y^n = y^n\big] \leq 2^{-n(R_Y(D)-o(1))},
\end{equation}
where $o(1)$ is a term that vanishes as $\delta\rightarrow 0$ and $n\rightarrow \infty$.
\end{lemma}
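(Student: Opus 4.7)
The plan is to mimic the method-of-types proof of Lemma~\ref{typebound}, but with conditional types taken with respect to the given side-information sequence $y^n$. Since the distribution $\prod_{i=1}^n P_{X|Y=y_i}$ only depends on $y^n$ through its type and factors through the joint type of $(x^n,y^n)$, standard type-class enumeration applies.

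First I would partition the event of interest according to the joint type $Q_{XYZ}$ of the triple $(x^n,y^n,z^n)$. The number of such joint types is polynomial in $n$, say at most $(n+1)^{|\Xcal||\Ycal||\Zcal|}$, which will be absorbed into a $2^{n\cdot o(1)}$ factor. The typicality constraint $(x^n,y^n)\in\Tcal_\delta^n(X,Y)$ restricts the $XY$-marginal $Q_{XY}$ to satisfy $\lVert Q_{XY}-P_{XY}\rVert_{\sf TV}<\delta$, while $d(x^n,z^n)\le D$ translates to $\Ebb_Q\,d(X,Z)\le D$.

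Next I would bound, for each fixed joint type $Q_{XYZ}$ in the allowed set, the contribution to the probability. Two standard ingredients are needed: (i) the number of $x^n$ in the conditional type class $\Tcal_{X|YZ}(y^n,z^n;Q)$ is at most $2^{n H_Q(X|Y,Z)}$; and (ii) for any $x^n$ with $T_{x^n y^n}=Q_{XY}$, a direct computation gives
\begin{equation}
\prod_{i=1}^n P_{X|Y}(x_i|y_i)=2^{-n\bigl[H_Q(X|Y)+D(Q_{X|Y}\,\|\,P_{X|Y}\,|\,Q_Y)\bigr]}.
\end{equation}
Multiplying these two bounds yields a per-type contribution of at most
\begin{equation}
2^{-n\bigl[I_Q(X;Z|Y)+D(Q_{X|Y}\,\|\,P_{X|Y}\,|\,Q_Y)\bigr]}\;\le\;2^{-n I_Q(X;Z|Y)},
\end{equation}
since relative entropy is nonnegative.

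Summing over the polynomially many allowed joint types and pulling the worst case outside, I get the bound $2^{-n[\min I_Q(X;Z|Y)-o(1)]}$, where the minimum is over $Q_{XYZ}$ with $\lVert Q_{XY}-P_{XY}\rVert_{\sf TV}<\delta$ and $\Ebb_Q\,d(X,Z)\le D$. The main (small) obstacle is showing that this constrained minimum converges to $R_Y(D)$ as $\delta\to 0$; this follows from standard continuity arguments for the conditional-rate-distortion function $R_Y(D)=\min_{P_{Z|XY}:\Ebb d(X,Z)\le D} I(X;Z|Y)$ in the underlying joint distribution, yielding the $o(1)$ term asserted in the lemma. Note that this argument is uniform in $y^n$, since $y^n$ only enters through its type (and through the conditional type classes whose cardinality estimates are type-only).
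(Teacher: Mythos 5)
The paper omits the proof of this lemma, describing it only as ``a straightforward generalization of Lemma~\ref{typebound}.'' Your argument is exactly that generalization and is correct: you replace the joint types $Q_{XZ}$ of Lemma~\ref{typebound}'s proof with joint types $Q_{XYZ}$ (with $Q_{YZ}=T_{y^nz^n}$ fixed), use the cardinality bound $|\Tcal_{X|YZ}|\le 2^{nH_Q(X|Y,Z)}$ and the identity $\prod_i P_{X|Y}(x_i|y_i)=2^{-n[H_Q(X|Y)+D(Q_{X|Y}\|P_{X|Y}|Q_Y)]}$ to get a per-type exponent of $I_Q(X;Z|Y)+D(\cdot\|\cdot)\ge I_Q(X;Z|Y)$, and then invoke continuity of the conditional rate-distortion function in the underlying distribution to send the constrained minimum to $R_Y(D)$ as $\delta\to 0$. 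One small stylistic difference: the appendix proof of Lemma~\ref{typebound} first bounds the probability without the typicality restriction, observes the exponent can fall below $R(D)$, and only then adds the typicality event; you build the typicality constraint in from the start and drop the divergence term immediately, which is cleaner and equivalent. The uniformity in $y^n$ that you note at the end is also the right observation, since $y^n$ enters only through its type.
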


As mentioned, Lemma~\ref{typebound2} implies
\begin{equation}
\label{zetasupport}
\zeta_{j,z^n} \in [0,2^{-n(R_Y(D) - o(1))}].
\end{equation}
In addition to bounding the range of $\zeta_{j,z^n}$ we can also bound its expected value. In fact, the bound is the same as for $\xi_{j,z^n}$.
\begin{IEEEeqnarray}{rCl}
\Ebb_{\Ccal_{\ysf}} \zeta_{j,z^n}
&=& \Ebb_{\Ccal_{\ysf}}\, \Pbb[ d(X^n,z^n) \leq D, \Acal \,|\, Y^n(j)] \\
&\leq& \Ebb_{\Ccal_{\ysf}}\, \Pbb[ d(X^n,z^n) \leq D, X^n \in \Tcal_{\delta} \,|\, Y^n(j)]\\
%&=& \sum_{y^n\in\Ycal^n} \prod_{i=1}^n P_Y(y_i) \sum_{x^n \in \Xcal^n} \prod_{i=1}^n P_{X|Y}(x_i|y_i) \mathbf{1} \{ d(x^n,z^n) \leq D, x^n \in \Tcal_{\delta} \}\\ 
%&=& \sum_{x^n \in \Xcal^n} \prod_{i=1}^n P_X(x_i) \mathbf{1} \{ d(x^n,z^n) \leq D, x^n \in \Tcal_{\delta} \}\\ 
&=& \Pbb[d(X^n,z^n) \leq D, X^n \in \Tcal_{\delta} ] \qquad (\text{where }X^n \sim \prod P_X)\\
\label{zetamean}&\stackrel{(a)}{\leq}& 2^{-n (R(D) - o(1))},
\end{IEEEeqnarray}
where (a) is due to Lemma~\ref{typebound}.

We are now ready to apply a Chernoff bound to the probability in \eqref{finalprob2}. First, we extend the Chernoff bound in Lemma~\ref{chernoff} to random variables taking values on the interval $[0,a]$ instead of just binary random variables.
\begin{cor}
\label{chernoffcor2}
If $X^m$ is a sequence of i.i.d. random variables on the interval $[0,a]$ with $\Ebb[X_i] = p$, then
\begin{equation}
\Pbb\Big[\sum_{i=1}^m X_i > k\Big] \leq \left(  \frac{e\!\cdot\! m \!\cdot\! p}{k} \right )^{k/a}.
\end{equation}
\end{cor}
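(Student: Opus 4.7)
The plan is to reduce the bounded case to the Bernoulli case of Lemma~\ref{chernoff} via a rescaling argument. First I would set $Y_i \triangleq X_i/a$, so that $Y_i$ takes values in $[0,1]$ with mean $p/a$, and rewrite the event $\{\sum_{i=1}^m X_i > k\}$ as $\{\sum_{i=1}^m Y_i > k/a\}$. This reduces the problem to bounding the upper tail of a sum of i.i.d.\ $[0,1]$-valued random variables, which can be handled by reusing the moment generating function calculation from the proof of Lemma~\ref{chernoff}.

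Next I would apply Markov's inequality to $\exp(\lambda \sum_i Y_i)$ for $\lambda > 0$, yielding
\begin{equation}
\Pbb\Big[\sum_{i=1}^m Y_i > k/a\Big] \leq e^{-\lambda k/a} \prod_{i=1}^m \Ebb[e^{\lambda Y_i}].
\end{equation}
The key observation is that for $y \in [0,1]$ the convexity of $y \mapsto e^{\lambda y}$ gives $e^{\lambda y} \leq 1 + y(e^{\lambda}-1)$, so
\begin{equation}
\Ebb[e^{\lambda Y_i}] \leq 1 + (p/a)(e^{\lambda}-1) \leq (p/a)\, e^{\lambda} + 1 \leq \exp\bigl((p/a)\, e^{\lambda}\bigr).
\end{equation}
This is exactly the Bernoulli MGF bound used in Lemma~\ref{chernoff}, but with success probability $p/a$ in place of $p$.

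Finally I would substitute into the Markov bound and optimize over $\lambda > 0$, exactly mirroring the last step of Lemma~\ref{chernoff}: the minimizer is $\lambda^\star = \ln\bigl(k/(mp)\bigr)$ (provided $k > mp$, otherwise the claimed bound is trivial since $(emp/k)^{k/a} \geq 1$), giving
\begin{equation}
\Pbb\Big[\sum_{i=1}^m Y_i > k/a\Big] \leq e^{-(k/a)\ln(k/(mp))} \cdot e^{(m p / a)\cdot k/(mp)} = \left(\frac{e\cdot m\cdot p}{k}\right)^{k/a}.
\end{equation}
Since the left-hand side equals $\Pbb[\sum_i X_i > k]$, this is the desired bound. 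I do not expect any real obstacle here: the rescaling $Y_i = X_i/a$ combined with the convexity bound on $e^{\lambda y}$ makes the entire argument a one-line generalization of Lemma~\ref{chernoff}, with the exponent $k$ replaced by $k/a$ coming directly from the rescaled threshold $k/a$ in Markov's inequality.
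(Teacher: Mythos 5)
Your proposal is correct and follows essentially the same route as the paper: rescale by $a$ to reduce to the $[0,1]$ case, then bound the MGF of a $[0,1]$-valued variable by the Bernoulli MGF with the same mean via convexity of $y\mapsto e^{\lambda y}$, and finally reuse the optimization from Lemma~\ref{chernoff}. The paper states the convexity step slightly more abstractly (for an arbitrary convex $f$, then specializes), but the substance is identical; your explicit note that the bound is trivial when $k\leq mp$ is a harmless extra bit of care.
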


\begin{proof}
We start by proving the case $a=1$. To begin, we claim that if $X\in[0,1]$ and $Y\in\{0,1\}$ are random variables such that $\Ebb[X] = \Ebb[Y]$ and $f: [0,1] \rightarrow \Rbb$ is convex, then
\begin{equation}
\Ebb[f(X)] \leq \Ebb[f(Y)].
\end{equation}
To see this, observe that for $x\in[0,1]$,
\begin{equation}
f(x) \leq x f(1) + (1-x) f(0).
\end{equation}
Taking expectations gives
\begin{IEEEeqnarray}{rCl}
\Ebb[f(X)]  &\leq& \Ebb[X] f(1) + (1-\Ebb[X]) f(0) \\
&=& \Ebb[Y] f(1) + (1-\Ebb[Y]) f(0) \\
&=& \Ebb[f(Y)],
\end{IEEEeqnarray}
verifying the claim. Now, since $f(x) = e^{\lambda x}$ is convex, the inequality $\Ebb[e^{\lambda Y}] \leq \Ebb[e^{\lambda X}]$ holds and can be applied to the proof of Lemma~\ref{chernoff} at \eqref{chernoffcorstep}.

With the case $a=1$ shown, we now consider any $a>0$. If we let $Y_i = \frac{1}{a} X_i \in [0,1]$, then the previous case applies and we have
\begin{IEEEeqnarray}{rCl}
\Pbb\Big[\sum_{i=1}^m X_i > k\Big] &\leq& \Pbb\Big[\sum_{i=1}^m a Y_i > \frac{k}{a}\Big] \\
&\leq& \Big ( \frac{e\!\cdot\! m \!\cdot\! \Ebb[Y_1]}{k/a} \Big )^{k/a} \\
&=& \Big ( \frac{e\!\cdot\! m \!\cdot\! p}{k} \Big )^{k/a}.
\end{IEEEeqnarray}
\end{proof}

Using the support bound and the expected value bound in \eqref{zetasupport} and \eqref{zetamean}, we can apply Corollary~\ref{chernoffcor2} to the probability in \eqref{finalprob2} by identifying
\begin{IEEEeqnarray}{rCl}
m &=& 2^{n R_{\sf C}}\\
a &=& 2^{-n(R_Y(D) - o(1))}\\
p &\leq& 2^{-n(R(D)-o(1))}\\ 
k &=& \tau_n 2^{n (R_{\sf C}-R)}.
\end{IEEEeqnarray}
This gives
\begin{equation}
\label{doubleexp2}
\Pbb\Big[\sum_{j=1}^{2^{nR_{\sf C}}} \zeta_{j,z^n} > \tau_n 2^{n(R_{\sf C}-R)}\Big] \leq 2^{-n\alpha2^{n\beta}},
\end{equation}
where
\begin{IEEEeqnarray}{rCl}
\alpha &=& R(D) - R -o(1)\\
\beta &=& R_{\sf C}+R_Y(D)-R-o(1).
\end{IEEEeqnarray}
For small enough $\delta$ and large enough $n$, both $\alpha$ and $\beta$ are positive and bounded away from zero, and \eqref{doubleexp2} vanishes doubly exponentially fast. Consequently, the expression in \eqref{finalprob2} vanishes, completing the proof of Theorem~\ref{cblossythm}.
\end{proof}

The following corollary to Theorem~\ref{cblossythm} is immediate, and, as in Section~\ref{sec:losslessachievability}, will serve as the bridge between the subproblem we have been considering and the henchman problem. 

 \begin{cor}
\label{cbcorlossy}
Fix $P_{XY}$. If $R < \min\{R(D), R_Y(D) + R_{\sf C}\}$, then
\begin{equation}
\lim_{n\to\infty}\Ebb_{\Ccal_{\ysf}}\Big[ \min_{(n,\Ccal_{\ysf},R)\text{ codes}}  \Pbb[d(X^n,Z^n)\geq D]\Big] = 1.
\end{equation}
\end{cor}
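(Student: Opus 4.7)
The plan is to derive the corollary from Theorem~\ref{cblossythm} via a short averaging argument, in two steps.

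First, I would relate the two optimization objectives appearing in the theorem and the corollary. For any fixed $(n,\Ccal_{\ysf},R)$ code, the events $\{d(X^n,Z^n) < D\}$ and $\{d(X^n,Z^n) \geq D\}$ partition the sample space, and $\{d < D\} \subseteq \{d \leq D\}$, so for each code
\begin{equation}
\Pbb[d(X^n,Z^n) \geq D] \geq 1 - \Pbb[d(X^n,Z^n) \leq D].
\end{equation}
Taking the min over codes on the left and the max on the right gives
\begin{equation}
\min_{(n,\Ccal_{\ysf},R)\text{ codes}} \Pbb[d(X^n,Z^n) \geq D] \geq 1 - \max_{(n,\Ccal_{\ysf},R)\text{ codes}} \Pbb[d(X^n,Z^n) \leq D].
\end{equation}

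Second, I would upgrade the ``with high probability over the codebook'' guarantee of Theorem~\ref{cblossythm} into an expectation bound. Fix any specific sub-exponential sequence $\tau_n \to 0$ (say $\tau_n = 1/n$). The theorem then yields $\Pbb_{\Ccal_{\ysf}}[\max_{\text{codes}} \Pbb[d(X^n,Z^n) \leq D] > \tau_n] \to 0$. Since the quantity inside the probability is bounded by $1$, a standard good-event/bad-event split gives
\begin{equation}
\Ebb_{\Ccal_{\ysf}}\Big[\max_{(n,\Ccal_{\ysf},R)\text{ codes}}  \Pbb[d(X^n,Z^n)\leq D]\Big] \leq \tau_n + \Pbb_{\Ccal_{\ysf}}\Big[\max_{\text{codes}} \Pbb[d(X^n,Z^n) \leq D] > \tau_n\Big] \xrightarrow{n\to\infty} 0.
\end{equation}
Taking the $\Ccal_{\ysf}$-expectation of the inequality from the first step and combining with this bound then yields $\liminf_{n\to\infty} \Ebb_{\Ccal_{\ysf}}[\min_{\text{codes}} \Pbb[d(X^n,Z^n) \geq D]] \geq 1$, while the trivial upper bound of $1$ closes the argument.

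There is no substantive obstacle beyond what Theorem~\ref{cblossythm} already delivers; the only thing to track carefully is the distinction between the ``$\leq D$'' appearing in the theorem and the ``$\geq D$'' appearing in the corollary, which the first step handles cleanly.
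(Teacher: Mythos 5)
Your proof is correct and is precisely the ``immediate'' argument the paper has in mind: the paper states Corollary~\ref{cbcorlossy} without proof, calling it immediate from Theorem~\ref{cblossythm} (just as Corollary~\ref{cbcor} was stated as readily established from Theorem~\ref{cbthm}), and the averaging argument you give --- pass from the $\geq D$ event to the complement of the $\leq D$ event, swap min for max, fix a concrete sub-exponential $\tau_n$, and split on the good/bad codebook event using the trivial bound by $1$ --- is the standard way to fill that in. One small point worth keeping, which you do track: the complement of $\{d\geq D\}$ is $\{d<D\}\subseteq\{d\leq D\}$, so the inequality $\Pbb[d\geq D]\geq 1-\Pbb[d\leq D]$ goes the right way and nothing is lost.
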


\subsection{Likelihood encoder}
Returning to the henchman problem, we follow the basic structure of Section~\ref{sec:losslessachievability}. Fixing $P_{Y|X}$ (and thus a joint distribution $P_{XY}$), consider a codebook $c=\{y^n(m,k)\}$ of $2^{n(R+R_0)}$ sequences from $\Ycal^n$ and define a likelihood encoder for this codebook by
\begin{equation}
P_{M|X^nK}(m|x^n,k) \propto \prod_{i=1}^n P_{X|Y}(x_i | y_i(m,k)),
\end{equation}
where $\propto$ indicates that appropriate normalization is required. The distribution $P_{X^nMK}$ induced by using this encoder with a random codebook is intimately related to an idealized distribution $Q_{X^nMK}$ defined by
\begin{equation}
Q_{X^nMK}(x^n,m,k) \triangleq 2^{-n(R+R_0)} \prod_{i=1}^n P_{X|Y}(x_i | y_i(m,k)).
\end{equation}
Indeed, just as in Lemma~\ref{distrclose}, one can use the soft covering lemma to show that if $R > I(X;Y)$, then
\begin{equation}
\label{distrtvlossy}
\lim_{n\to\infty} \Ebb_{\Ccal} \big\lVert P_{X^nMK} - Q_{X^nMK} \big\rVert_{\sf TV} = 0,
\end{equation}
where $\Ccal$ is a random codebook with each codeword drawn independently according to $\prod_{i=1}^n P_Y$.

Inspecting $Q_{X^nMK}$ reveals that $Q_{X^n|M=m}$ is exactly the distribution that was addressed in the recent interlude. To see this, observe that $Q_{X^nK|M=m}$ is the joint distribution that arises from selecting a codeword uniformly from a codebook of size $2^{nR_0}$ and passing it through a memoryless channel $\prod P_{X|Y}$. To be explicit,
\begin{equation}
\label{idealconditional}
Q_{X^n|M}(x^n | m) = 2^{-nR_0} \sum_{k=1}^{2^{nR_0}} \prod_{i=1}^n P_{X|Y}(x_i | Y_i(m,k)).
\end{equation}
Proceeding with the analysis of the eavesdropper's distortion, first denote the event
\begin{equation}
\Ecal = \{ d(X^n, z^n(M,M_{\sf H})) \geq \pi(R_{\sf L},R_0,P_{Y|X}) - \eps \},
\end{equation}
where
\begin{equation}
\pi(R_{\sf L},R_0,P_{Y|X}) \triangleq 
\begin{cases}
D(R_{\sf L}) & \text{if } R_0 > R_{\sf L}\\
\min\{D(R_{\sf L}), D_Y(R_{\sf L} - R_0)\} & \text{if }R_0 \leq R_{\sf L}
\end{cases}
\end{equation}
The purpose of $\pi(\cdot)$ is to treat the cases $R_{\sf L} < R_0$ and $R_{\sf L} \geq R_0$ concurrently.

Taking the expectation of \eqref{listsecrecylossy} with respect to a random codebook, we have
\begin{equation}
\label{relatecorlossy} \Ebb_{\Ccal} \Big[\min_{\substack{m_{\sf H}(x^n,m),z^n(m,m_{\sf H}):\\|\Mcal_{\sf H}|\leq2^{nR_{\sf L}}}} \Pbb_{P_{X^nM}}
[\Ecal]\Big] = \Ebb_M \,\Ebb_{\Ccal} \Big[\min_{\substack{m_{\sf H}(x^n),z^n(m_{\sf H}):\\|\Mcal_{\sf H}|\leq2^{nR_{\sf L}}}} \Pbb_{Q_{X^n|M}} [\Ecal | M] \Big] + o(1).
\end{equation}
From \eqref{idealconditional}, we see that the expression in \eqref{relatecorlossy} is exactly what is addressed by Corollary~\ref{cbcorlossy} after identifying $(R_0,R_{\sf L})$ with $(R_{\sf C}, R)$. Note that we are invoking the corollary with $D=\pi(R_{\sf L},R_0,P_{Y|X}) - \eps$, which means that 
\begin{equation}
R_{\sf L} < \min\{ R(D), R_Y(D) + R_0 \}.
\end{equation} 
Thus, we have
\begin{equation}
\lim_{n\to\infty}\Ebb_{\Ccal} \Big[\min_{\substack{m_{\sf H}(x^n),z^n(m_{\sf H}):\\|\Mcal_{\sf H}|\leq2^{nR_{\sf L}}}} \Pbb_{Q_{X^n|M=m}} [\Ecal | M=m] \Big] = 1.
\end{equation}
Therefore, we can conclude that there exists a codebook such that the associated likelihood encoder ensures~\eqref{listsecrecylossy}, because \eqref{listsecrecylossy} holds when averaged over random codebooks. 

We now complete the proof of achievability by showing that the likelihood encoder can be used to achieve distortion $\Ebb\, d_{\sf B}(X,Y)$ at the legitimate receiver (this is also done in \cite{Song2014}). To do this, Node B uses a deterministic decoder that simply produces the codeword indexed by $(m,k)$, i.e.,
\begin{equation}
P_{Y^n|MK}(y^n|m,k) = \mathbf{1}\{ y^n = y^n(m,k) \}.
\end{equation}
Defining $Q_{X^nMKY^n} \triangleq Q_{X^nMK}P_{Y^n|MK}$, we can write
\begin{IEEEeqnarray}{rCl}
\Ebb_{\Ccal} \big\lVert P_{X^nY^n} - Q_{X^nY^n} \big\rVert_{\sf TV}
&\stackrel{(a)}{\leq}& \Ebb_{\Ccal} \big\lVert P_{X^nMK}P_{Y^n|MK} - Q_{X^nMK} P_{Y^n|MK} \big\rVert_{\sf TV} \\
&\stackrel{(b)}{=}&  \Ebb_{\Ccal} \big\lVert P_{X^nMK} - Q_{X^nMK} \big\rVert_{\sf TV}\\
\label{lossyjointclose} &\stackrel{(c)}{\rightarrow}& 0,
\end{IEEEeqnarray}
where (a) and (b) follow from Properties~\ref{tvproperties}d and \ref{tvproperties}c, and (c) is due to \eqref{distrtvlossy}. Now notice that $\Ebb_{\Ccal} Q_{X^nY^n}$ is exactly the product distribution $\prod_{i=1}^n P_{XY}$ (a fact which is straightforward to verify). Therefore, by \eqref{lossyjointclose} and the weak law of large numbers, we have
\begin{IEEEeqnarray}{rCl}
\Ebb_{\Ccal}\,\Pbb\Big[d_{\sf B}(X^n,Y^n) > \Ebb\,d_{\sf B}(X,Y)+\eps\Big]
&=& \Ebb_{\Ccal}\,\Pbb_{Q_{X^nY^n}} \Big[d_{\sf B}(X^n,Y^n) > \Ebb\,d_{\sf B}(X,Y)+\eps\Big] + o(1)\\
&=& o(1)
\end{IEEEeqnarray}
This completes the achievability portion of the proof of Theorem~\ref{mainresultlossy}.

%\section{Discussion and Conclusion}
%Throughout the proofs of the main theorems (Theorem~\ref{mainresult} and Theorem~\ref{mainresultlossy}), we focused exclusively on the henchman formulation of the secrecy measure since it allowed us to view the eavesdropper as a competing communication system with side information instead of a single entity producing a list. Returning now to the original list reconstruction formulation, we reflect briefly on the eavesdropper's strategy and performance when Nodes A and B act optimally. 

\appendices
\section{Proof of Lemma~\ref{typebound}}
We first bound $\Pbb[d(X^n,z^n)\leq D]$ and resolve the event $X^n\in \Tcal_{\delta}^n$ afterward. We use the V-shell notation from the method of types \cite{Csiszar1998}: for a stochastic matrix $V_{Z|X}$, the set of $z^n$ sequences having conditional type $V$ is denoted by $\Tcal_{V}^n(x^n)$. Note that all pairs $(x^n,z^n)$ satisfying $z^n\in \Tcal_{V}^n(x^n)$ have the same joint type (denoted by $T_{x^nz^n}$).

Diving in, we have
\begin{IEEEeqnarray}{rCl}
\Pbb[d(X^n,z^n)\leq D] &=& \sum_{x^n\in\Xcal^n} P_{X^n}(x^n) \mathbf{1}\{d(x^n,z^n)\leq D\} \\
 &\stackrel{(a)}{=}& \sum_{V_{X|Z}} \sum_{x^n\in \Tcal_{V}^n(z^n)} P_{X^n}(x^n)  \mathbf{1}\{d(x^n,z^n)\leq D\}\\
 &\stackrel{(b)}{=}& \sum_{V_{X|Z}} \sum_{x^n\in \Tcal_{V}^n(z^n)} 2^{-n(D(T_{x^n} || P_X)+H(T_{x^n}))} \mathbf{1}\{\Ebb_{T_{x^nz^n}} d(X,Z)\leq D\}\\
 &\stackrel{(c)}{\leq} & \sum_{\substack{V_{X|Z}:\\ \Tcal_{V}^n(z^n)\neq\emptyset}} 2^{nH(T_{x^n} | T_{z^n})} 2^{-n(D(T_{x^n} || P_X)+H(T_{x^n}))} \mathbf{1}\{\Ebb_{T_{x^nz^n}} d(X,Z)\leq D\}\\
 &=& \sum_{\substack{V_{X|Z}:\\ \Tcal_{V}^n(z^n)\neq\emptyset}} 2^{-n(I(T_{x^nz^n}) + D(T_{x^n} || P_X))} \mathbf{1}\{\Ebb_{T_{x^nz^n}} d(X,Z)\leq D\}\\
 \label{exponent1} &\stackrel{(d)}{\leq}& \exp \Big\{ - n \min_{V: \Ebb_{T_{x^nz^n}} d(X,Z)\leq D} [I(T_{x^nz^n}) + D(T_{x^n} || P_X)] + O(\log n) \Big\}.
\end{IEEEeqnarray}
In step (a), we partition the set $\Xcal^n$ according to the conditional type of $x^n$ given $z^n$. Step (b) follows by observing that the summands only depend on the joint type of $(x^n,z^n)$. Step (c) uses a bound on the size of $\Tcal_{V}(z^n)$, and step (d) follows from the fact that the number of conditional types is polynomial in $n$.

We can continue by lower bounding the first term in the (normalized) exponent of \eqref{exponent1}:
\begin{IEEEeqnarray}{rCl}
\IEEEeqnarraymulticol{3}{l}{\nonumber
\min_{V: \Ebb_{T_{x^nz^n}} d(X,Z)\leq D} I(T_{x^nz^n}) + D(T_{x^n} || P_X)  
}\\
\qquad\qquad &\geq& \min_{z^n}\min_{V: \Ebb_{T_{x^nz^n}} d(X,Z)\leq D} I(T_{x^nz^n}) + D(T_{x^n} || P_X) \\
&=& \min_{Q_{XZ}:\Ebb_{Q} d(X,Z) \leq D} I_Q(X;Z) + D(Q_X||P_X) \\
&=& \min_{Q_X} \min_{Q_{Z|X}: \Ebb_{Q} d(X,Z) \leq D} I_Q(X;Z) + D(Q_X||P_X) \\
&=& \min_{Q_X}\, [R(D,Q_X) + D(Q_X||P_X)],
\end{IEEEeqnarray}
where $R(D,Q_X)$ denotes the rate-distortion function for a source $Q_X$.

So far, we have shown that the following holds for all $z^n$:
\begin{equation}
\Pbb[d(X^n,z^n)\leq D] \leq \exp \{ -n \cdot \min_{Q_X}\, [R(D,Q_X) + D(Q_X||P_X)] + O(\log n) \}.
\end{equation}

However, this is not quite the bound we seek; a simple example will reveal that it is possible to have
\begin{equation}
\min_{Q_X}\, [R(D,Q_X) + D(Q_X||P_X)] < R(D).
\end{equation}
Indeed, consider $P_X \sim \text{Bern}(p),p\in(D,1/2)$ and $Q_X \sim \text{Bern}(q)$. After simplifying, we find that
\begin{equation}
R(D,Q_X) + D(Q_X||P_X) = q\log \frac{1}{p} + (1-q) \log \frac{1}{1-p} - h(D).
\end{equation}
Minimizing this expression over $q\in[0,1]$ gives
\begin{IEEEeqnarray}{rCl}
\min_{Q_X}\, [R(D,Q_X) + D(Q_X||P_X)] &=& \min\Big\{\log\frac{1}{p},\log\frac{1}{1-p}\Big\} - h(D)\\
 &<& h(p) - h(D)\\
 &=& R(D).
\end{IEEEeqnarray}
To resolve this issue, we introduce the event $X^n\in \Tcal_{\delta}^n$ into the expression we want to bound. Modifying the steps above accordingly, we have
\begin{IEEEeqnarray}{rCl}
-\frac{1}{n}\log \Pbb[d(X^n,z^n)\leq D, X^n \in \Tcal_{\delta}^n] &\geq& \min_{Q_X: \lVert Q_X - P_X \rVert_{\sf TV} < \delta} R(D,Q_X) + D(Q_X||P_X) - O(\tfrac{\log n}{n})\\
&\stackrel{(a)}{\geq}& \min_{Q_X: \lVert Q_X - P_X \rVert_{\sf TV} < \delta} R(D,Q_X) - O(\tfrac{\log n}{n})\\
&\stackrel{(b)}{=}& R(D) - O(\delta \log \tfrac{1}{\delta}) - O(\tfrac{\log n}{n})\\
&=& R(D) - o(1),
\end{IEEEeqnarray}
where step (a) is due to the non-negativity of relative entropy and step (b) follows from the uniform continuity of the rate-distortion function with respect to total variation distance (e.g., \cite{Palaiyanur2008}).
%\section{Acknowledgements}
%This research was supported in part by the National Science Foundation under Grant CCF-1116013, and also by the Air Force Office of Scientific Research under Grant FA9550-12-1-0196.
\bibliographystyle{IEEEtran}
\bibliography{henchman_isit}

% Generated by IEEEtran.bst, version: 1.13 (2008/09/30)
\begin{thebibliography}{10}
\providecommand{\url}[1]{#1}
\csname url@samestyle\endcsname
\providecommand{\newblock}{\relax}
\providecommand{\bibinfo}[2]{#2}
\providecommand{\BIBentrySTDinterwordspacing}{\spaceskip=0pt\relax}
\providecommand{\BIBentryALTinterwordstretchfactor}{4}
\providecommand{\BIBentryALTinterwordspacing}{\spaceskip=\fontdimen2\font plus
\BIBentryALTinterwordstretchfactor\fontdimen3\font minus
  \fontdimen4\font\relax}
\providecommand{\BIBforeignlanguage}[2]{{%
\expandafter\ifx\csname l@#1\endcsname\relax
\typeout{** WARNING: IEEEtran.bst: No hyphenation pattern has been}%
\typeout{** loaded for the language `#1'. Using the pattern for}%
\typeout{** the default language instead.}%
\else
\language=\csname l@#1\endcsname
\fi
#2}}
\providecommand{\BIBdecl}{\relax}
\BIBdecl

\bibitem{Schieler2014henchman}
C.~Schieler and P.~Cuff, ``The henchman problem: measuring secrecy by the
  minimum distortion in a list,'' in \emph{Proc. IEEE Int. Symp. on Information
  Theory (ISIT)}, Jul. 2014.

\bibitem{Shannon1949}
C.~Shannon, ``Communication theory of secrecy systems,'' \emph{Bell Syst. Tech.
  J.}, vol.~28, no.~4, pp. 656--715, Oct. 1949.

\bibitem{Yamamoto1997}
H.~Yamamoto, ``Rate-distortion theory for the {S}hannon cipher system,''
  \emph{IEEE Trans. Inf. Theory}, vol.~43, no.~3, pp. 827--835, May 1997.

\bibitem{Schieler2013}
\BIBentryALTinterwordspacing
C.~Schieler and P.~Cuff, ``Rate-distortion theory for secrecy systems,''
  \emph{To appear in IEEE Trans. Inf. Theory}, 2014. [Online]. Available:
  \url{http://arxiv.org/abs/1305.3905}
\BIBentrySTDinterwordspacing

\bibitem{Merhav1999}
N.~Merhav and E.~Arikan, ``The {S}hannon cipher system with a guessing
  wiretapper,'' \emph{IEEE Trans. Inf. Theory}, vol.~45, no.~6, pp. 1860--1866,
  1999.

\bibitem{Haroutunian2010}
\BIBentryALTinterwordspacing
E.~Haroutunian, ``On the {S}hannon cipher system with a wiretapper guessing
  subject to distortion and reliability requirements,'' Aug. 2010. [Online].
  Available: \url{http://arxiv.org/abs/1008.0961}
\BIBentrySTDinterwordspacing

\bibitem{Song2014}
\BIBentryALTinterwordspacing
E.~C. Song, P.~Cuff, and H.~V. Poor, ``The likelihood encoder for lossy source
  compression,'' Apr. 2014. [Online]. Available:
  \url{http://arxiv.org/abs/1404.5683}
\BIBentrySTDinterwordspacing

\bibitem{Cuff2013}
P.~Cuff, ``Distributed channel synthesis,'' \emph{IEEE Trans. Inf. Theory},
  vol.~59, no.~11, pp. 7071--7096, 2013.

\bibitem{Csiszar1998}
I.~Csisz{\'a}r, ``The method of types,'' \emph{IEEE Trans. Inf. Theory},
  vol.~44, no.~6, pp. 2505--2523, 1998.

\bibitem{Csiszar2011}
I.~Csisz{\'a}r and J.~K{\"o}rner, \emph{Information theory: coding theorems for
  discrete memoryless systems}.\hskip 1em plus 0.5em minus 0.4em\relax
  Cambridge University Press, 2011.

\bibitem{Weissman2005}
T.~Weissman and E.~Ordentlich, ``The empirical distribution of rate-constrained
  source codes,'' \emph{IEEE Trans. Inf. Theory}, vol.~51, no.~11, pp.
  3718--3733, Nov. 2005.

\bibitem{Palaiyanur2008}
H.~Palaiyanur and A.~Sahai, ``On the uniform continuity of the rate-distortion
  function,'' in \emph{Proc. IEEE Int. Symp. on Information Theory (ISIT)},
  Jul. 2008, pp. 857--861.

\end{thebibliography}
\end{document}